\numberwithin{equation}{section}
\newcommand\R{{\mathbb{R}}}
\newcommand\N{{\mathbb{N}}}
\newcommand\B{{\mathcal{B}}}
\newcommand\M{{\mathcal{M}}}
\newcommand\D{{\mathcal{D}}}
\newcommand\X{{{\mathbb{R}}^d}}
\newcommand\1{{\rm 1\kern -.3600em 1}}
\newcommand\KK{{\mathrm{K}}}
\newcommand\K{{\mathscr{K}}}
\renewcommand\L{{\mathscr{L}}}
\newcommand\Bbs{{B_\mathrm{bs}(\Ga_0^2)}}
\newcommand{\ga}{{\gamma}}
\newcommand{\Ga}{{\Gamma}}
\newcommand{\la}{{\lambda}}
\newcommand{\La}{{\Lambda}}
\newcommand\cnt[2]{\text{\setbox2=\hbox{#1}\rlap{\hbox to \wd2{\hfil#2\hfil}}\box2}}
\newcommand\Star{\mathbin{\cnt{$\bigcirc$}{$\star$}}}
\newtheorem{theorem}{Theorem}[section]
\newtheorem{lemma}[theorem]{Lemma}
\newtheorem{proposition}[theorem]{Proposition}
\newtheorem{remark}[theorem]{Remark}
\begin{document}
\author{\textbf{Dmitri L.~Finkelshtein} \\
{\small Institute of Mathematics, Ukrainian National Academy of Sciences, 
01601 Kiev, Ukraine}\\
{\small fdl@imath.kiev.ua} \and
\textbf{Yuri G.~Kondratiev} \\
{\small Fakult\"at f\"ur Mathematik, Universit\"at Bielefeld,
D 33615 Bielefeld, Germany}\\
{\small Forschungszentrum BiBoS, Universit\"at Bielefeld, 
D 33615 Bielefeld, Germany}\\
{\small kondrat@mathematik.uni-bielefeld.de} \and 
\textbf{Maria Jo\~{a}o Oliveira} \\
{\small Universidade Aberta, P 1269-001 Lisbon, Portugal}\\
{\small CMAF, University of Lisbon, P 1649-003 Lisbon, Portugal}\\ 
{\small oliveira@cii.fc.ul.pt}}

\title{Markov evolutions and hierarchical equations in the continuum: II.~Multicomponent systems}

\date{}
\maketitle

\begin{abstract}
General birth-and-death as well as hopping stochastic dynamics of
infinite multicomponent particle systems in the continuum are
considered. We derive the corresponding evolution equations for
quasi-observables and correlation functions. We also present sufficient 
conditions that allows us to consider these equations on suitable Banach 
spaces.
\end{abstract}

\noindent \textbf{Keywords:} Continuous system; Markov
generator; Markov process; Stochastic dynamics; Configuration spaces; 
Birth-and-death process; Hopping particles

\medskip

\noindent \textbf{Mathematics Subject Classification (2010):} 82C22, 60K35

\newpage

\section{Introduction}

Spatial Markov processes in $\X$ can be described as stochastic evolutions of 
locally finite configurations. From this standpoint, two important classes of 
stochastic dynamics are represented by birth-and-death and hopping Markov 
processes on the configuration space $\Gamma$ over $\X$,
\[
\Ga :=\bigl\{ \ga  \subset \X  : | \ga\cap\Lambda| <\infty, \hbox{ for
every compact } \Lambda\subset \X \bigr\}.
\]
These are processes where randomly, at each random moment of time, particles 
(or individuals) disappear and new particles appear or, in the case of hopping 
particle systems, particles hop over the space $\X$, according to rates which 
in both cases depend on the configuration of the whole system at that time. 
However, both cases concern only one type of particles. 

Motivated by concrete ecological models \cite{CFM08,DM10,FFK07}, 
socio-economics models or even mathematical physics problems, e.g., the Potts 
model \cite{GH96,GMRZ06,KZ07}, in this work we extend these two classes of 
stochastic dynamics to Markov stochastic evolutions of different particle 
types. For simplicity of notation, we just present this extension for 
two particle types. A similar procedure applies to $n>2$ particle types, but 
with a more cumbersome notation.    

Since two particles cannot be located at the same position, the natural phase 
space is a subset of the direct product of two copies of the space $\Ga$, 
$\Ga^+$ and $\Ga^-$, namely,
\[
\Ga^2:=\bigl\{(\ga ^+,\ga ^-)\in\Ga^+\times\Ga^-: \ga ^+\cap\ga
^-=\emptyset\bigr\}.
\]
Given a configuration $(\ga^+,\ga^-)\in\Ga^2$, the aforementioned fields of 
applications suggest that, according to certain rates of probability, at 
each random moment of time several random phenomena may occur:\footnote{Here 
and below, for simplicity of notation, we have just written $x, y$ instead of 
$\{x\}, \{y\}$, respectively.} 
\begin{description}
\item[\qquad \rm Death of a $+$-particle:]
$(\ga^+,\ga^-)\longmapsto (\ga^+\setminus x,\ga^-)$, $x\in\ga^+$;
\item[\qquad \rm Birth of a new $+$-particle:] $(\ga^+,\ga^-)\longmapsto (\ga^+\cup x,\ga^-)$, $x\in(\X\setminus\ga^+)\setminus \ga^-$;
\item[\qquad \rm Hop of a $+$-particle to a free site:] $$(\ga^+,\ga^-)\longmapsto (\ga^+\setminus x\cup y,\ga^-), \qquad x\in\ga^+,\ y\in(\X\setminus\ga^+)\setminus \ga^-;$$
\item[\qquad \rm Hop of a $+$-particle flipping the mark to $-$:]  $$(\ga^+,\ga^-)\longmapsto (\ga^+\setminus x,\ga^-\cup y), \qquad x\in\ga^+,\ y\in(\X\setminus\ga^+)\setminus \ga^-;$$
\item[\qquad \rm Flip the mark $+$ to $-$, keeping the site:]  $$(\ga^+,\ga^-)\longmapsto (\ga^+\setminus x,\ga^-\cup x), \qquad x\in\ga^+.$$
\end{description}
Similar events naturally may occur with $-$-particles. In other words, besides 
the natural complexity imposed by the existence of different particle types, 
the treatment of multicomponent particle systems also deals with a higher 
number of possible random phenomena. 

Heuristically, the stochastic dynamics of a multicomponent particle system is 
described through a Markov generator $L$ defined according to the 
aforementioned elementary random phenomena and corresponding rates. The time 
evolution of states (that is, probability measures on $\Ga^2$) in the weak 
form may be formulated by means of the following initial value problems
\begin{equation}\label{Eq}
\frac{d}{d t}\langle F,\mu_t\rangle=\langle LF,\mu_t\rangle, \qquad
\mu_t\bigr|_{t=0}=\mu_0,
\end{equation}
for a wide class of functions $F$ on $\Ga^2$ (where 
$\langle\cdot,\cdot\rangle$ is the usual dual pairing between functions and 
measures on $\Ga^2$). For the study of \eqref{Eq}, we may consider the 
corresponding time evolution equations for correlation functionals 
(factorial moments) $k_t$ corresponding to the measures $\mu_t$. These are 
equations having a hierarchical structure similar to the well-known 
BBGKY-hierarchy for the Hamiltonian dynamics. However, in applications, 
frequently correlation functionals are not integrable, being a technical 
difficulty to proceed this study, even in a weak sense (corresponding to 
\eqref{Eq}). Having in mind the construction of a weak solution, we then 
analyze the (pre-)dual problem, that is, the so-called time evolution of 
quasi-observables. These are functions which naturally can be considered in 
proper spaces of integrable functions, allowing then to overtake the 
technical difficulties pointed out. Furthermore, the evolution equation for 
quasi-observables still has hierarchical structure.

For further developments and applications, in this work explicit formulas 
for the aforementioned hierarchical equations of general birth-and-death,  
hopping, and flipping multicomponent particle systems are derived. For the 
one-component case, a similar scheme has been proposed in \cite{FKO05} and 
explicit forms for corresponding hierarchical equations have been presented 
therein. Within this setting, problems concerning one-component hierarchies 
and many applications were exposed, e.g., in \cite{FKKoz2011, FKK09, FKK11, 
FKK10,FKK10a,FKKZ2010,KKM2008,KKP06,KoKtZh06}. Naturally, due to the complexity 
mentioned above, one cannot infer from the one-component case corresponding 
results for multicomponent systems. Motivated by recent applications, in this 
work we slightly change the procedure used in \cite{FKO05}, which for 
one-component birth-and-death models is used in \cite{FKK11}.    
This change allows, in particular, to wide the class of rates. Sufficient 
conditions on the rates to give rise to linear operators 
on suitable Banach spaces and concrete examples of rates are analyzed as well.
 
\section{Markov evolutions in multicomponent configuration spaces}\label{Section2}

\subsection{One-component configuration spaces}\label{Subsection21}

The configuration space $\Ga :=\Ga _{\X }$ over $\X $, $d\in\N$, is
defined as the set of all locally finite subsets of $\X $ (that is, 
configurations),
\[
\Ga :=\bigl\{ \ga  \subset \X  : | \ga _\Lambda| <\infty, \hbox{ for
every compact } \Lambda\subset \X \bigr\} ,
\]
where $\left| \cdot \right|$ denotes the cardinality of a set and
$\ga _\Lambda := \ga  \cap \Lambda$. We identify each $\ga  \in \Ga
$ with the non-negative Radon measure $\sum_{x\in \ga }\delta_x\in
\M (\X )$, where $\delta_x$ is the Dirac measure with unit mass at
$x$, $\sum_{x\in\emptyset}\delta_x$ is, by definition, the zero
measure, and $\M (\X )$ denotes the space of all non-negative Radon
measures on the Borel $\sigma$-algebra $\B (\X )$. This
identification allows to endow $\Ga $ with the topology induced by
the vague topology on $\M (\X )$, that is, the weakest topology on
$\Ga$ with respect to which all mappings $\Ga \ni \ga  \mapsto
\sum_{x\in \ga  }f(x)$, $f\in C_c(\X )$, are continuous. Here
$C_c(\X )$ denotes the set of all continuous functions on $\X $ with
compact support. We denote by $\B (\Ga )$ the corresponding Borel
$\sigma$-algebra on $\Ga$.

Let us now consider the space of finite configurations
$$\Ga_0 := \bigsqcup_{n=0}^\infty \Ga^{(n)},$$
where $\Ga^{(n)} :=  \{ \ga \in \Ga: \vert \ga \vert = n\}$ for
$n\in \N$ and $\Ga^{(0)} := \{\emptyset\}$. For $n\in \N$, there is
a natural bijection between the space $\Ga^{(n)}$ and the
symmetrization $\widetilde{(\X )^n} \diagup S_n$ of the set
$\widetilde{(\X )^n}:= \{(x_1,...,x_n)\in (\X )^n: x_i\not= x_j
\hbox{ if } i\not= j\}$ under the permutation group $S_n$ over
$\{1,...,n\}$ acting on $\widetilde{(\X )^n}$ by permuting the
coordinate indexes. This bijection induces a metrizable topology on
$\Ga^{(n)}$, and we endow $\Ga_0$ with the metrizable topology of
disjoint union of topological spaces. We denote the corresponding Borel 
$\sigma$-algebras on $\Ga^{(n)}$ and $\Ga_0$ by $\B (\Ga^{(n)})$ and 
$\B(\Ga_0)$, respectively.

We proceed to consider the $K$-transform \cite{Le72,Le75a, Le75b,KoKu99}. Let 
$\B _c(\X )$ denote the set of all bounded Borel sets in $\X $,
and for each $\Lambda\in \B _c(\X )$ let $\Ga_\Lambda := \{\eta\in
\Ga: \eta\subset \Lambda\}$. Evidently $\Ga_\Lambda =
\bigsqcup_{n=0}^\infty \Ga_\Lambda^{(n)}$, where
$\Ga_\Lambda^{(n)}:= \Ga_\Lambda \cap \Ga^{(n)}$, $n\in
\N_0:=\N\cup\{0\}$, leading to a
 situation similar to the one for $\Ga_0$, described above.
We endow $\Ga_\Lambda$ with the topology of the disjoint union of
topological spaces and with the corresponding Borel $\sigma$-algebra
$\B (\Ga_\Lambda)$. To define the $K$-transform, among the functions
defined on $\Ga_0$ we distinguish the bounded $\B
(\Ga_0)$-measurable functions $G$ with bounded support, i.e.,
$G\!\!\upharpoonright _{\Ga_0\setminus \left(\bigsqcup_{n=0}^N\Ga
_\Lambda ^{(n)}\right)}\equiv 0$ for some $N\in\N_0$, $\Lambda \in
\B _c(\X )$. We denote the space of all such functions $G$ by
$B_{\mathrm{bs}}(\Ga_0)$. Given a $G\in B_{\mathrm{bs}}(\Ga_0)$, the 
$K$-transform of $G$ is a mapping $KG:\Ga\to\R$ defined at each
$\ga \in\Ga$ by
\begin{equation}
(KG)(\ga  ):=\sum_{\substack{\eta \subset \ga \\  \vert\eta\vert <
\infty} } G(\eta ). \label{Eq2.9}
\end{equation}
Note that for each function $G\in B_{\mathrm{bs}}(\Ga_0)$ the sum
in \eqref{Eq2.9} has only a finite number of summands different from
zero, and thus $KG$ is a well-defined function on $\Ga$. Moreover,
if $G$ has support described as before, then the restriction
$(KG)\!\!\upharpoonright_{\Ga _\Lambda}$ is a $\B
(\Ga_\Lambda)$-measurable function and $(KG)(\ga
)=(KG)\!\!\upharpoonright _{\Ga _\Lambda }\!\!(\ga _\Lambda)$ for
all $\ga \in\Ga$. That is, $KG$ is a cylinder function. In addition,
for each constant $C\geq\vert G\vert$ one finds $\vert(KG)(\ga )\vert\leq
C(1+\vert\ga _\Lambda\vert)^N$ for all $\ga \in\Ga$. As a result,
besides the cylindricity property, $KG$ is also polynomially
bounded.

It has been shown in \cite{KoKu99} that $K:
B_{\mathrm{bs}}(\Ga_0)\rightarrow K(B_{\mathrm{bs}}(\Ga_0))$ is a
linear isomorphism whose inverse mapping is defined by
\begin{equation*}
\left( K^{-1}F\right) (\eta ):=\sum_{\xi \subset \eta }(-1)^{|\eta
\backslash \xi |}F(\xi ),\quad \eta \in \Ga _0.
\end{equation*}

\subsection{Multicomponent configuration spaces}\label{Subsection-new}

The previous definitions naturally extend to any $n$-component configuration 
spaces. For simplicity of notation, we just present the extension for $n=2$.
A similar procedure is used for $n>2$, but with a more cumbersome notation.

Given two copies of the space $\Ga$, denoted by $\Ga^+$ and $\Ga^-$,
let
\[
\Ga^2:=\bigl\{(\ga ^+,\ga ^-)\in\Ga^+\times\Ga^-: \ga ^+\cap\ga
^-=\emptyset\bigr\}.
\]
Concerning the elements in $\Ga^2$, we observe they may be regarded
as marked one-configurations for the space of marks $\{+,-\}$
(spins). Similarly, given two copies of the space $\Ga_0$, $\Ga_0^+$ 
and $\Ga_0^-$, we consider the space
\[
\Ga^2_0:=\bigl\{(\eta^+,\eta^-)\in\Ga_0^+\times\Ga_0^-:
\eta^+\cap\eta^-=\emptyset\bigr\}.
\]

We endow $\Ga^2$ and $\Ga_0^2$ with the topology induced by the
product of the topological spaces $\Ga^+\times\Ga^-$ and
$\Ga_0^+\times\Ga_0^-$, respectively, and with the corresponding
Borel $\sigma$-algebras, denoted by $\B (\Ga^2)$ and $\B (\Ga^2_0)$.
Thus, a bounded $\B (\Ga^2_0)$-measurable function $G:\Ga^2_0\to\R$
has bounded support ($G\in \Bbs $, for short) whenever
$G\!\!\upharpoonright _{\Ga_0^2\backslash
\left(\bigsqcup_{n=0}^{N^+}\Ga
_{\Lambda^+}^{(n)}\times\bigsqcup_{n=0}^{N^-}\Ga
_{\Lambda^-}^{(n)}\right)}\equiv 0$ for some $N^+,N^-\in\N_0$,
$\Lambda^+,\Lambda^-\in \B _c(\X )$. In this way, given a function
$G\in \Bbs $, the mapping $\KK G$ defined at each $\ga =(\ga ^+,\ga
^-)\in\Ga^2$ by
\begin{equation}
(\KK G)(\ga  ):= \sum_{\substack{\eta^+\subset\ga ^+ \\
\vert\eta^+\vert < \infty}} \sum_{\substack{\eta^-\subset\ga ^-\\
\vert\eta^-\vert < \infty}} G(\eta^+,\eta^-) \label{M1}
\end{equation}
is a well-defined function on $\Ga^2$. For this verification, as
well as for other forthcoming ones, let us observe that given the
unit operator $I^\pm$ on functions on $\Ga^\pm$ (and thus, on
$\Ga_0^\pm$) and the operators defined on functions on $\Ga_0^2$ by
$K^+:=K\otimes I^-$, $K^-:=I^+\otimes K$ one may write,
equivalently to \eqref{M1},
\begin{equation}
\KK =K^+K^-=K^-K^+. \label{M2}
\end{equation}
We call the mapping $\KK G:\Ga^2\to\R$ the $\KK $-transform of $G$.

Either directly from definition \eqref{M1} or from \eqref{M2}, it is
clear that given a $G\in \Bbs $ described as before, the $\KK G$ is
a polynomially bounded cylinder function such that $(\KK G)(\ga
^+,\ga ^-)=(\KK G)(\ga ^+_{\Lambda^+},\ga ^-_{\Lambda^-})$ for all
$(\ga ^+,\ga ^-)\in\Ga^2$ and, for each constant $C\geq \vert G\vert$,
\[
\vert(\KK G)(\ga^+,\ga^-)\vert\leq C(1+\vert\ga
^+_{\Lambda^+}\vert)^{N^+}(1+\vert\ga
^-_{\Lambda^-}\vert)^{N^-},\quad (\ga ^+,\ga ^-)\in\Ga^2.
\]
Moreover, $\KK :\Bbs\rightarrow\mathcal{FP}(\Ga^2):=\KK (\Bbs )$ is a linear 
and positivity preserving isomorphism whose inverse mapping is defined by
\begin{equation}\label{inverseK2}
\left(\KK ^{-1}F\right) (\eta^+,\eta^-):=
\sum_{\xi^+\subset\eta^+}\sum_{\xi^-\subset\eta^-}
(-1)^{|\eta^+\backslash\xi^+|+|\eta^-\backslash\xi^-|}F(\xi^+,\xi^-),
\end{equation}
for all $(\eta^+,\eta^-)\in\Ga_0^2$.

\begin{remark}\label{remark}
Given any $\mathcal{B}(\Ga^2)$-measurable function $F$, observe that the 
right-hand side of \eqref{inverseK2} is also well-defined for 
$F\!\!\upharpoonright_{\Ga _0^2}$. In this case, since there will be no 
risk of confusion, we will denote the right-hand side of \eqref{inverseK2} by 
$\KK^{-1}F$.
\end{remark}

Let $\M _{\mathrm{fm}}^1(\Ga^2)$ denote the set of all probability
measures $\mu$ on $(\Ga^2,\B (\Ga^2))$ with finite local moments of
all orders, i.e.,
\begin{equation}
\int_{\Ga^2} d\mu(\ga^+,\ga^-)\, |\ga ^+_\Lambda|^n|\ga
^-_\Lambda|^n<\infty\quad \mathrm{for\,\,all}\,\,n\in\N
\mathrm{\,\,and\,\,all\,\,} \Lambda \in \B _c(\X ).\label{Dima5}
\end{equation}
Given a $\mu\in\M _{\mathrm{fm}}^1(\Ga^2)$, the so-called
correlation measure $\rho_\mu$ corresponding to $\mu$ is a measure
on $(\Ga _0^2,\B (\Ga_0^2))$ defined for all $G\in \Bbs $ by
\begin{equation}
\int_{\Ga_0^2}d\rho_\mu(\eta^+,\eta^-)\,G(\eta^+,\eta^-)=\int_{\Ga^2} d\mu(\ga^+,\ga^-)\,
\left(\KK G\right) (\ga^+,\ga^-).  \label{Eq2.16}
\end{equation}
Note that under these assumptions $\KK \left|G\right|$ is
$\mu$-integrable, and thus, \eqref{Eq2.16} is well-defined. In terms
of correlation measures, this means that $\Bbs \subset
L^1(\Ga_0^2,\rho_\mu)$. Actually, $\Bbs $ is dense in
$L^1(\Ga_0^2,\rho_\mu)$. Moreover, still by \eqref{Eq2.16}, on $\Bbs
$ the inequality $\Vert\KK G\Vert_{L^1(\Ga^2,\mu)}\leq \Vert
G\Vert_{L^1(\Ga_0^2,\rho_\mu)}$ holds, allowing an extension of the
$\KK $-transform to a bounded linear operator $\KK
:L^1(\Ga_0^2,\rho_\mu)\to L^1(\Ga^2,\mu)$ in such a way that
equality \eqref{Eq2.16} still holds for any $G\in
L^1(\Ga_0^2,\rho_\mu)$. For the extended operator the explicit form
\eqref{Eq2.9} still holds, now $\mu$-a.e.

Just to conclude this part, let us observe that in terms of correlation
measures property \eqref{Dima5} means that $\rho_\mu$ is
locally finite, that is, $\rho_\mu((\Ga _\Lambda ^{(n)}\times\Ga
_\Lambda ^{(m)})\cap\Ga_0^2)<\infty$ for all $n,m\in\N_0$ and all
$\Lambda\in\B _c(\X )$. 

\medskip

\noindent
\textbf{Poisson and Lebesgue-Poisson measures.} Given a constant $z>0$, 
let $\lambda_z$ be the Lebesgue--Poisson measure on $(\Ga_0,\B (\Ga_0))$,
\begin{equation}\label{LPm}
\lambda_z:=\sum_{n=0}^\infty \frac{z^n}{n!} m^{(n)},
\end{equation}
where each $m^{(n)}$, $n\in \N$, is the image measure on $\Ga^{(n)}$
of the product measure $dx_1...dx_n$ under the mapping
$\widetilde{(\X )^n}\ni (x_1,...,x_n)\mapsto\{x_1,...,x_n\}\in
\Ga^{(n)}$. For $n=0$ one sets $m^{(0)}(\{\emptyset\}):=1$. The product measure
$\lambda^2_z:=\lambda_z\otimes\lambda_z$ on $(\Ga_0^2,\B (\Ga_0^2))$
is the correlation measure corresponding to the product measure
$\pi_z\otimes\pi_z$ of the Poisson measure $\pi_z$ on $(\Ga,\B
(\Ga))$ with intensity $zdx$, that is, the probability measure defined on 
$(\Gamma,\mathcal{B}(\Gamma))$ by
\[
\int_\Gamma d\pi_z(\gamma)\,\exp \left( \sum_{x\in \gamma }\varphi (x)\right)
=\exp \left( z\int_{\R^d}dx\,\left( e^{\varphi (x)}-1\right)\right)
\]
for all smooth functions $\varphi$ on $\R^d$ with compact support.

If a correlation measure $\rho_\mu$ is absolutely
continuous with respect to the Lebesgue--Poisson measure
$\lambda^2:=\lambda_1^2$, the Radon--Nikodym derivative
$k_\mu:=\dfrac{d\rho_\mu}{d\lambda^2}$ is called the correlation
functional corresponding to $\mu$. Sufficient conditions for the existence of 
correlation functionals may be found e.g.~in \cite{F07}.

Technically, the next statement will be useful. It is an extension to the
multicomponent case of an integration result over $\Gamma_0$ 
(see e.g.~\cite{FicFre91,KMZ2004,Ru69}). 

\begin{lemma}
\label{Lmm2} The following equality holds
\begin{align}\label{MI}
&\int_{\Ga^2_0}d\lambda^2(\eta^+,\eta^-)\,
\sum_{\substack{\xi^+\subset\eta^+\\
\xi^-\subset\eta^-}}H(\eta^+,\eta^-,\xi^+,\xi^-)\\
=&\int_{\Ga_0^2}d\lambda^2(\eta^+,\eta^-)\int_{\Ga_0^2}d\lambda^2(\xi^+,\xi^-)\,
H(\eta^+\cup\xi^+,\eta^-\cup\xi^-,\xi^+,\xi^-)\nonumber
\end{align}
for  all measurable functions $H:\Ga^2_0\times  \Ga^2_0\to\R$ with respect to 
which at least one side of equality \eqref{MI} is finite for $|H|$.
\end{lemma}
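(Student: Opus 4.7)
The plan is to reduce the multicomponent identity to two applications of the classical one-component Minlos/Lebesgue--Poisson integration formula
\[
\int_{\Ga_0} d\lambda(\eta) \sum_{\xi\subset\eta} h(\eta,\xi) = \int_{\Ga_0} d\lambda(\eta)\int_{\Ga_0} d\lambda(\xi)\, h(\eta\cup\xi,\xi),
\]
which is recalled in the references cited before the lemma statement. The essential input is the product structure $\lambda^2 = \lambda\otimes\lambda$ together with the fact that the ``bad set'' $\{(\eta^+,\eta^-) : \eta^+\cap\eta^-\ne\emptyset\}$ is a $\lambda\otimes\lambda$-null subset of $\Ga_0^+\times\Ga_0^-$ (it demands the coincidence of at least one Lebesgue-distributed point from each configuration), so integrals over $\Ga_0^2$ and over $\Ga_0^+\times\Ga_0^-$ agree.

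First I would unfold the left-hand side. Using Fubini (justified by the hypothesis that one side is finite for $|H|$) and the fact that the sum over $\xi^\pm\subset\eta^\pm$ factorises, I rewrite
\[
\int_{\Ga_0^2}\!\! d\lambda^2(\eta^+,\eta^-) \sum_{\xi^+\subset\eta^+}\sum_{\xi^-\subset\eta^-} H
= \int_{\Ga_0^+}\!\! d\lambda(\eta^+) \sum_{\xi^+\subset\eta^+} \int_{\Ga_0^-}\!\! d\lambda(\eta^-) \sum_{\xi^-\subset\eta^-} H(\eta^+,\eta^-,\xi^+,\xi^-).
\]
Applying the one-component formula in the $\eta^-$ variable (with $\eta^+,\xi^+$ held fixed) turns the inner piece into
\[
\int_{\Ga_0^-} d\lambda(\eta^-) \int_{\Ga_0^-} d\lambda(\xi^-)\, H(\eta^+,\eta^-\cup\xi^-,\xi^+,\xi^-).
\]
Applying the one-component formula once more in the $\eta^+$ variable then produces
\[
\int_{\Ga_0^+}\!\! d\lambda(\eta^+) \int_{\Ga_0^+}\!\! d\lambda(\xi^+) \int_{\Ga_0^-}\!\! d\lambda(\eta^-) \int_{\Ga_0^-}\!\! d\lambda(\xi^-)\, H(\eta^+\cup\xi^+,\eta^-\cup\xi^-,\xi^+,\xi^-).
\]
Recombining the iterated integrals into $\int d\lambda^2(\eta^+,\eta^-)\int d\lambda^2(\xi^+,\xi^-)$ via Fubini (and discarding the null ``intersecting pairs'' set to pass back to $\Ga_0^2$) yields the right-hand side of \eqref{MI}.

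The only non-routine point is justifying the interchange of sum and integral and the use of Fubini under the stated hypothesis that at least one side is finite for $|H|$. This is handled in the standard way: run the whole computation first with $|H|$ in place of $H$. The chain of equalities above holds unconditionally in $[0,\infty]$ because all operations are monotone limits of finite sums and integrals of nonnegative functions; whichever side is finite therefore forces the other to be finite, and this common finiteness legitimises Fubini and the rearrangements for the signed $H$. The two-step induction on the number of components is exactly the obstacle one anticipates for the general $n$-component case, but for $n=2$ no further idea beyond the two successive applications above is needed.
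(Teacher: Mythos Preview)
Your argument is correct. The paper does not actually prove this lemma: it merely states it, calling it ``an extension to the multicomponent case of an integration result over $\Gamma_0$'' and citing the one-component references. Your reduction to two successive applications of the one-component Minlos identity via the product structure $\lambda^2=\lambda\otimes\lambda$, together with the observation that the diagonal set $\{\eta^+\cap\eta^-\neq\emptyset\}$ is $\lambda\otimes\lambda$-null, is exactly the natural way to supply the missing details, and your handling of the integrability hypothesis (first running the computation for $|H|$ in $[0,\infty]$ to justify Fubini) is the standard and correct manoeuvre.
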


\medskip

\noindent
\textbf{Algebraic properties.} The extension to functions defined on $\Ga_0^2$ 
of the $\star$-convolution introduced in \cite{KoKu99} for functions
defined on $\Ga_0$ has very similar properties. Given $G_1$ and
$G_2$ two $\B (\Ga_0^2)$-measurable functions we define the
$\Star$-convolution between $G_1$ and $G_2$ by
\begin{align}\nonumber
&(G_1\Star
G_2)(\eta^+,\eta^-)\\:=&\sum_{\substack{(\eta^+_1,\eta^+_2,\eta^+_3)\in\mathcal{P}_3(\eta^+)\\(\eta^-_1,\eta^-_2,\eta^-_3)\in\mathcal{P}_3(\eta^-)}}
G_1(\eta^+_1\cup\eta^+_2,\eta^-_1\cup\eta^-_2)G_2(\eta^+_2\cup\eta^+_3,\eta^-_2\cup\eta^-_3)\nonumber\\
=&\sum_{\substack{\xi^+\subset\eta^+\\ \xi^-\subset\eta^-}}G_1(\xi^+,\xi^-)
\sum_{\substack{\zeta^+\subset\xi^+\\ \zeta^-\subset\xi^-}}
G_2((\eta^+\setminus\xi^+)\cup\zeta^+,(\eta^-\setminus\xi^-)\cup\zeta^-),\label{dd}
\end{align}
where $\mathcal{P}_3(\eta^\pm)$ denotes the set of all partitions of
$\eta^\pm$ in three parts which may be empty. It is straightforward
to verify that the space of all $\B (\Ga_0^2)$-measurable functions
endowed with this product has the structure of a commutative algebra
with unit element $0^{|\eta^+|}0^{|\eta^-|}$. Furthermore, for each $G_1,
G_2\in \Bbs $ we have $G_1\Star G_2\in \Bbs $, and
\begin{equation*}
\KK \left( G_1\Star G_2\right) =\left(\KK G_1\right)\cdot\left(\KK
G_2\right). 
\end{equation*}

From definition \eqref{dd} it follows that for any 
$\mathcal{B}(\Ga^2)$-measurable functions $F_1,F_2$ such that 
$F_1\!\!\upharpoonright_{\Ga _0^2}, F_2\!\!\upharpoonright_{\Ga _0^2}$ are 
$\mathcal{B}(\Ga^2_0)$-measurable we have (cf.~Remark \ref{remark})
\begin{equation}\label{combin}
(\KK ^{-1}F_1)\Star(\KK ^{-1}F_2)=\KK ^{-1}(F_1F_2).
\end{equation}

\subsection{Markov generators and related evolution equations}\label{Subsection22}

Heuristically, the stochastic evolution of an infinite two-component particle 
system is described by a Markov process on $\Ga^2$, which is determined by a 
Markov generator $L$ defined on a proper space of functions on $\Ga^2$. If
such a Markov process exists, then it provides a solution to the (backward) 
Kolmogorov equation
\begin{equation*}
\frac{d}{d t}F_t=LF_t, \qquad F_t\bigr|_{t=0}=F_0.
\end{equation*}
However, the construction of a generic Markov process, either on $\Ga^2$ or 
$\Ga$, is essentially an open problem (for some particular cases on $\Ga$
see e.g.~\cite{GK2006, GK2008}).

In spite of this technical difficulty, in applications it turns out that
we need a knowledge on certain characteristics of the stochastic 
evolution in terms of mean values rather than pointwise. These 
characteristics concern e.g.~observables, that is, functions defined on 
$\Ga^2$, which expected values are given by
\begin{equation*}
\langle F,\mu\rangle:=\int_{\Ga^2}\,d\mu(\ga ^+,\ga ^-)F(\ga ^+,\ga
^-),
\end{equation*}
being $\mu$ a probability measure on $\Ga^2$, that is, a state of the system.
This leads to the following time evolution problem on states,
\begin{equation}\label{StateEvol}
\frac{d}{d t}\langle F,\mu_t\rangle=\langle LF,\mu_t\rangle, \qquad \mu_t\bigr|_{t=0}=\mu_0.
\end{equation}
For $F$ being of the type $F=\KK G$, $G\in\Bbs$, \eqref{StateEvol} may be 
rewritten in terms of the correlation functionals $k_t=k_{\mu_t}$ 
corresponding to the measures $\mu_t$, provided these functionals exist (or, 
more generally, in terms of correlation measures $\rho_t=\rho_{\mu_t}$), 
yielding
\begin{equation}\label{QBKE}
\frac{d}{dt}\langle\!\langle G,k_t\rangle\!\rangle=\langle\!\langle \hat{L}G,k_t\rangle\!\rangle, \qquad k_t\bigr|_{t=0}=k_0,
\end{equation}
where $\hat{L}:=\KK ^{-1} L \KK $ (cf.~Remark \ref{remark}) and 
$\langle\!\langle\cdot,\cdot\rangle\!\rangle$ is the usual pairing
\begin{equation}\label{duality}
\langle\!\langle
G,k\rangle\!\rangle:=\int_{\Ga^2_0}d\lambda^2(\eta^+,\eta^-)\,G(\eta^+,\eta^-)\,k(\eta^+,\eta^-).
\end{equation}

Of course, a strong version of equation \eqref{QBKE} is
\begin{equation}\label{CFE}
\frac{d}{dt}k_t=\hat{L}^*k_t, \qquad k_t\bigr|_{t=0}=k_0,
\end{equation}
for $\hat{L}^*$ being the dual operator of $\hat{L}$ in the sense defined in 
\eqref{duality}. One may associate to any function $k$ on $\Ga_0^2$ a 
double sequence $\bigl\{k^{(n,m)} \bigr\}_{n,m\in\N_0}$, where $k^{(n,m)}:=k\!\!\upharpoonright_{\{(\eta^+,\eta^-)\in\Ga_0^2:\vert\eta^+\vert=n, \vert\eta^-\vert =m\}}$ is a 
symmetric function on $(\X )^n\times(\X )^m$. This means that related to 
\eqref{CFE} one has a countable infinite number of equations having an 
hierarchical structure,
\begin{equation}
\frac {d}{d t}k_t^{(n,m)}=(\hat{L}^*k_t)^{(n,m)},\qquad k^{(n,m)}_t\bigr|_{t=0}=k^{(n,m)}_0\quad n,m\in\mathbb{N}_0,
\label{nova}
\end{equation}
where each equation only depends on a finite number of coordinates. As a 
result, we have reduced the infinite-dimensional problem \eqref{StateEvol}
to the infinite system of equations \eqref{nova}. However, it is convenient to 
recall here that, due to \eqref{QBKE}, we are only interesting in weak 
solutions to \eqref{nova}. 

Evolutions \eqref{QBKE}, \eqref{CFE} are obviously connected with an initial 
value problem on quasi-observables, that is, functions defined on $\Ga_0^2$, 
namely,
\begin{equation}\label{QOE}
\frac{d}{dt}G_t=\hat{L}G_t, \qquad G_t\bigr|_{t=0}=G_0.
\end{equation}
As explained before, one may also associate to \eqref{QOE} a double 
sequence, and thus, a countable infinite number of equations having also an 
hierarchical structure. In concrete cases, sometimes equation \eqref{QOE} 
appears easier to be analyzed in a suitable space. Having a 
solution to \eqref{QOE}, by duality \eqref{duality}, one might find a solution 
to \eqref{QBKE}. For instance, for birth-and-death systems on $\Ga$, this 
scheme has been accomplished in \cite{FKK11} through the derivation of 
semigroup evolutions for quasi-observables and correlation functions. Those 
results can be naturally extended to the multicomponent case. However, on each 
concrete application of other multicomponent models, namely, the conservative 
models considered below, the explicit form of the rates determines specific 
assumptions, and thus a specific analysis, which only hold for that concrete 
application.   

According to the considerations above, there is a close connection between the 
Markov evolution \eqref{StateEvol} and the hierarchical equations \eqref{CFE} 
and \eqref{QOE}. Of course, to derive solutions to \eqref{StateEvol} from 
solutions to \eqref{QBKE} an additional analysis is needed, namely, to 
distinguish the correlation functionals from the set of solutions to 
\eqref{QBKE}.

In what follows we derive explicit formulas for $\hat{L}, \hat{L}^*$ of 
general birth-and-death, hopping and flipping particle systems. For each case, 
explicit expressions are first derived on the space $\Bbs$, and then 
extended to linear operators on suitable Banach spaces.  
  
\section{Birth-and-death dynamics}\label{Section3}

\subsection{Hierarchical equations}

In a birth-and-death dynamics of a stochastic spatial type model, at
each random moment of time, particles randomly appear or disappear according 
to birth and death rates which depend on the configuration of the whole system 
at that time. As each particle is of one of the two possible types, 
$+$ and $-$, generators for such systems are informally described as the sum
of birth-and-death generators $L_{+}$ and $L_{-}$ of the $+$-system
and the $-$-system of particles involved. That is,
\begin{equation}
L=L_{+}+L_{-},\label{M9}
\end{equation}
where
\begin{align}
(L_{+}F)(\ga ^+,\ga ^-):=&\sum_{x\in\ga ^+}d^+(x,\ga ^+\setminus
x,\ga ^-)
\left(F(\ga ^+\setminus x,\ga ^-)-F(\ga ^+,\ga ^-)\right)\label{R1}\\
&+ \int_{\X }dx\, b^+(x,\ga ^+,\ga ^-)\left(F(\ga ^+\cup x,\ga
^-)-F(\ga ^+,\ga ^-)\right)\nonumber
\end{align}
and
\begin{align}
(L_{-}F)(\ga ^+,\ga ^-):=&\sum_{y\in\ga ^-}d^-(y,\ga ^+,\ga
^-\setminus y)
\left(F(\ga ^+,\ga ^-\setminus y)-F(\ga ^+,\ga ^-)\right)\label{Madeira}\\
&+\int_{\X }dy\, b^-(y,\ga ^+,\ga ^-)\left(F(\ga ^+,\ga ^-\cup
y)-F(\ga ^+,\ga ^-)\right).\notag
\end{align}
We observe that in \eqref{R1} the coefficient $d^+(x,\ga
^+,\ga ^-)\geq 0$ indicates the rate at which a + particle located
at $x\in\ga ^+$ dies or disappears, while $b^+(x,\ga ^+,\ga ^-)\geq
0$ indicates the rate at which, given a configuration $(\ga ^+,\ga
^-)$, a new + particle is born or appears at a site $x$. A similar
interpretation holds for the rates $d^-$ and $b^-$ appearing in
\eqref{Madeira}.

In order to give a meaning to \eqref{R1}, \eqref{Madeira}, in what follows we 
assume that $d^\pm,  b^\pm\geq 0$ are measurable functions such that, for 
a.a.~$x\in\X$, $d^\pm(x,\cdot,\cdot), b^\pm(x,\cdot,\cdot)$ are 
$\mathcal{B}(\Ga^2_0)$-measurable functions and, for 
$(\eta^+,\eta^-)\in\Ga^2_0$, $d^\pm(\cdot,\eta^+,\eta^-)$, 
$b^\pm(\cdot,\eta^+,\eta^-)\in L^1_{\mathrm{loc}}(\X,dx)$. These conditions are 
sufficient to ensure that for any $F\in\mathcal{FP}(\Ga^2)=\KK (\Bbs)$ the 
expression for $LF$, defined above, is well-defined at least on $\Ga^2_0$, 
which allows to define $\KK ^{-1}L\KK G$ (Remark \ref{remark}). This means, in 
particular, that for functions $G\in \Bbs$,
\begin{equation*}
(\hat{L}G)(\eta^+,\eta^-)=(\KK ^{-1}L\KK G)(\eta^+,\eta^-)
\end{equation*}
is well-defined on $\Ga^2_0$. In addition, the previous conditions allow to 
introduce the functions
\begin{align}\label{Ddef}
D^\pm(x,\xi^+,\xi^-,\eta^+,\eta^-)&:=\bigl(\KK ^{-1}d^\pm(x,\cdot\cup\xi^+,\cdot\cup\xi^-)\bigr)(\eta^+,\eta^-),\\
B^\pm(x,\xi^+,\xi^-,\eta^+,\eta^-)&:=\bigl(\KK
^{-1}b^\pm(x,\cdot\cup\xi^+,\cdot\cup\xi^-)\bigr)(\eta^+,\eta^-),\label{Bdef}
\end{align}
for a.a.~$x\in\X $, $(\eta^+,\eta^-), (\xi^+,\xi^-)\in\Ga^2_0$ such that 
$\eta^\pm\cap\xi^\pm=\emptyset$. We set
\begin{align*}
D^\pm_x(\eta^+,\eta^-)&:=D^\pm(x,\emptyset,\emptyset,\eta^+,\eta^-),\\
B^\pm_x(\eta^+,\eta^-)&:=B^\pm(x,\emptyset,\emptyset,\eta^+,\eta^-).
\end{align*}

\begin{proposition}\label{Prop1} The action of $\hat{L}$ on functions 
$G\in\Bbs $ is given for any $(\eta^+,\eta^-)\in\Ga_0^2$ by
\begin{align}\label{Lhatexpr}
&(\hat{L}G)(\eta^+,\eta^-)\\=&-\sum_{\substack{\xi^+\subset\eta^+\\
\xi^-\subset\eta^-}}G(\xi^+,\xi^-)\sum_{x\in\xi^+}D^+\bigl(x,\xi^+\setminus
x,\xi^-,\eta^+\setminus\xi^+,
\eta^-\setminus\xi^-\bigr)\nonumber\\&+\,\sum_{\substack{\xi^+\subset\eta^+\\
\xi^-\subset\eta^-}}\int_{\X }dx\,G(\xi^+\cup
x,\xi^-)B^+\bigl(x,\xi^+,\xi^-,\eta^+\setminus\xi^+,
\eta^-\setminus\xi^-\bigr)\nonumber\\&-\sum_{\substack{\xi^+\subset\eta^+\\
\xi^-\subset\eta^-}}G(\xi^+,\xi^-)\sum_{y\in\xi^-}D^-\bigl(y,\xi^+,\xi^-\setminus
y,\eta^+\setminus\xi^+,
\eta^-\setminus\xi^-\bigr)\nonumber\\&+\sum_{\substack{\xi^+\subset\eta^+\\
\xi^-\subset\eta^-}}\int_{\X }dy\,G(\xi^+,\xi^-\cup
y)B^-\bigl(y,\xi^+,\xi^-,\eta^+\setminus\xi^+,
\eta^-\setminus\xi^-\bigr).\nonumber
\end{align}
\end{proposition}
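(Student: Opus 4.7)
By linearity of $\hat L=\KK^{-1}L\KK$ and the decomposition $L=L_+^{\mathrm{death}}+L_+^{\mathrm{birth}}+L_-^{\mathrm{death}}+L_-^{\mathrm{birth}}$, I would compute the four contributions separately. The $\pm$-symmetry built into $\KK=K^+K^-=K^-K^+$ of \eqref{M2} makes the $-$-terms verbatim mirrors of the $+$-terms, so the real work lies in $\hat L_+^{\mathrm{death}}G$ and $\hat L_+^{\mathrm{birth}}G$. In both cases I would expand $\KK^{-1}$ via \eqref{inverseK2}, use Fubini to pull the $x$-integral outside (justified by the local integrability hypotheses on $b^+$ and $d^+$), and then swap the order of the sums over the nested subsets $\alpha^\pm\subset\xi^\pm\subset\eta^\pm$.

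The two starting identities, both immediate from \eqref{M1} by splitting subsets of $\ga^+$ according to whether they contain $x$, are
\[
(\KK G)(\ga^+\cup x,\ga^-)-(\KK G)(\ga^+,\ga^-)=(\KK G_x)(\ga^+,\ga^-),\qquad G_x(\eta^+,\eta^-):=G(\eta^+\cup x,\eta^-),
\]
for the birth case, and
\[
(\KK G)(\ga^+\setminus x,\ga^-)-(\KK G)(\ga^+,\ga^-)=-\sum_{\substack{\eta^+\subset\ga^+,\,x\in\eta^+\\ \eta^-\subset\ga^-}}G(\eta^+,\eta^-)
\]
for the death case, since removing $x$ from $\ga^+$ kills exactly those subsets of $\ga^+$ that contain $x$.

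For the birth part, substituting the first identity into \eqref{R1} and carrying out the steps above pulls $G(\alpha^+\cup x,\alpha^-)$ to the outside, leaving the inner signed subset sum
\[
\sum_{\substack{\alpha^+\subset\xi^+\subset\eta^+\\ \alpha^-\subset\xi^-\subset\eta^-}}(-1)^{|\eta^+\setminus\xi^+|+|\eta^-\setminus\xi^-|}b^+(x,\xi^+,\xi^-).
\]
The reindexing $\beta^\pm:=\xi^\pm\setminus\alpha^\pm\subset\eta^\pm\setminus\alpha^\pm$ identifies this, by \eqref{Bdef} and \eqref{inverseK2}, as exactly $B^+(x,\alpha^+,\alpha^-,\eta^+\setminus\alpha^+,\eta^-\setminus\alpha^-)$; relabelling $\alpha\to\xi$ reproduces the second line of \eqref{Lhatexpr}. (As an alternative, one could appeal to the algebraic identity \eqref{combin} to rewrite $\KK^{-1}(b^+(x,\cdot,\cdot)\KK G_x)=B^+_x\Star G_x$ and then expand via \eqref{dd}.) The death part runs on the same track, but the second identity above brings in the extra constraint $x\in\alpha^+$; after the analogous reindexing, writing $\xi^+\setminus x=(\alpha^+\setminus x)\cup\beta^+$, the inner sum identifies as $D^+(x,\alpha^+\setminus x,\alpha^-,\eta^+\setminus\alpha^+,\eta^-\setminus\alpha^-)$, producing the first line of \eqref{Lhatexpr}.

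The main obstacle I anticipate is the combinatorial bookkeeping in the death piece: carrying the constraint $x\in\alpha^+$ through the reindexing from $\xi^\pm$ to $\beta^\pm$ and recognizing the resulting sign-weighted sum as the defining expansion of $D^+$ in \eqref{Ddef} requires some care. Everything else is either a routine unfolding of \eqref{M1} and \eqref{inverseK2}, a Fubini/summation exchange justified by the hypotheses recorded just before the proposition, or a verbatim $+\leftrightarrow-$ transposition.
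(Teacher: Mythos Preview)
Your proposal is correct and the combinatorics you outline (swap the nested subset sums, reindex $\beta^\pm=\xi^\pm\setminus\alpha^\pm$, recognize the resulting signed sum as the defining $\KK^{-1}$-expansion of $B^+$ or $D^+$) go through exactly as you describe, including the death piece with the constraint $x\in\alpha^+$.

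The paper's proof is organized slightly differently: rather than expanding $\KK^{-1}$ directly and swapping sums, it takes as its main route the $\Star$-convolution identity \eqref{combin} (which you mention only as an alternative). Concretely, it writes $\hat L_+G=-\sum_{x\in\eta^+}(D^+_x\Star G(\cdot\cup x,\cdot))(\eta^+\setminus x,\eta^-)+\int dx\,(B^+_x\Star G(\cdot\cup x,\cdot))(\eta^+,\eta^-)$, expands $\Star$ via \eqref{dd}, and then collapses the inner sum using the identity $(\KK^{-1}F'(\cdot\cup\xi^+,\cdot\cup\xi^-))(\eta^+,\eta^-)=(\KK G'(\eta^+\cup\cdot,\eta^-\cup\cdot))(\xi^+,\xi^-)$ for $F'=\KK G'$. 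Your direct reindexing and the paper's convolution route are really the same computation packaged two ways: the paper front-loads the algebra so that the final identification of $D^+,B^+$ is a one-line application of a general lemma, whereas your approach avoids the $\Star$-machinery at the cost of doing the bookkeeping by hand. Either is perfectly adequate here.
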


\begin{proof} We begin by observing that the integrability property of 
$b^\pm,d^\pm$ implies that $B^\pm,D^\pm$ are locally integrable on $\X$, and 
thus, for $G\in \Bbs$, both integrals appearing in \eqref{Lhatexpr} are finite.

Since $L$ is of the form \eqref{M9}, the proof of this result 
reduces to show the statement for $L_{+}$ and $L_{-}$. For this 
purpose, first we observe that from definition \eqref{M1} of the 
$\KK$-transform, for any $(\ga ^+,\ga ^-)\in\Ga_0^2$ we have
\begin{align*}
(\KK G)(\ga ^+\setminus x,\ga ^-)-(\KK G)(\ga ^+,\ga
^-)&=-\sum_{\eta^+\subset\ga ^+\setminus x}\sum_{\eta^-\subset\ga
^-}G(\eta^+\cup x,\eta^-),\\(\KK G)(\ga ^+\cup x,\ga ^-)-(\KK G)(\ga
^+,\ga ^-)&=\sum_{\eta^+\subset\ga ^+}\sum_{\eta^-\subset\ga
^-}G(\eta^+\cup x,\eta^-),\ x\notin\ga^+.
\end{align*}
We observe, in addition, that given a function $H$ of the form
\[
H(\ga ^+,\ga ^-):=\sum_{x\in\ga ^+}h(x,\ga ^+\setminus x,\ga ^-),
\]
for some suitable $h:\X\times\Ga^2\rightarrow \R$, it follows from definition
\eqref{inverseK2} of $\KK^{-1}$ that
\begin{equation}\label{rulesum}
(\KK ^{-1}H)(\eta^+,\eta^-)=\sum_{x\in\eta^+} (\KK^{-1}h)(x,\eta^+\setminus x,\eta^-).
\end{equation}
As a result, using definitions \eqref{Ddef}, \eqref{Bdef} of $B^+,D^+$ and 
the algebraic property \eqref{combin} of the $\Star$-convolution, we obtain 
the following expression for $\hat{L}_+G:=\KK^{-1}L_+\KK G$, $G\in\Bbs$,
\begin{align*}
 (\hat{L}_+G)(\eta^+,\eta^-)=&-\sum_{x\in\eta^+}\left(D_x^+\Star G(\cdot\cup
x,\cdot)\right)(\eta^+\setminus x,\eta^-)
\\&+\int_{\X }dx\,\left(B_x^+\Star G(\cdot\cup x,\cdot)\right)(\eta^+,\eta^-),
\end{align*}
which, by definition \eqref{dd} of the $\Star$-convolution, is equivalent to
\begin{align*}
&(\hat{L}_+G)(\eta^+,\eta^-)\\=&-\sum_{x\in\eta^+}\sum_{\substack{\xi^+\subset\eta^+\setminus
x\\ \xi^-\subset\eta^-}}G(\xi^+\cup x,\xi^-)
\sum_{\substack{\zeta^+\subset\xi^+\\ \zeta^-\subset\xi^-}}
D_x^+(((\eta^+\setminus
x)\setminus\xi^+)\cup\zeta^+,(\eta^-\setminus\xi^-)\cup\zeta^-)\\&+\int_{\X
}dx\,\sum_{\substack{\xi^+\subset\eta^+\\
\xi^-\subset\eta^-}}G(\xi^+\cup x,\xi^-)
\sum_{\substack{\zeta^+\subset\xi^+\\ \zeta^-\subset\xi^-}}
B_x^+((\eta^+\setminus\xi^+)\cup\zeta^+,(\eta^-\setminus\xi^-)\cup\zeta^-).
\end{align*}

Given a $\mathcal{B}(\Ga_0^2)$-measurable function $G'$ and
$(\eta_1^+,\eta_1^-)$, $(\eta_2^+,\eta_2^-)\in\Ga_0^2$, from the equality 
\[
(\KK G')(\eta_1^+\cup\eta_2^+,\eta_1^-\cup\eta_2^-)=\sum_{\xi_1^+\subset\eta_1^+}\sum_{\xi_2^+\subset\eta_2^+}\sum_{\xi_1^-\subset\eta_1^-}\sum_{\xi_2^-\subset\eta_2^-}G'(\xi_1^+\cup\xi_2^+,\xi_1^-\cup\xi_2^-)
\]
it follows that, for $F'(\eta^+,\eta^-):=(\KK G')(\eta^+,\eta^-)$, we have
\[
\bigl(\KK ^{-1} F'(\cdot\cup\xi^+,\cdot\cup\xi^-)\bigr)(\eta^+,\eta^-)=\bigl(\KK
G'(\eta^+\cup\cdot,\eta^-\cup\cdot)\bigr)(\xi^+,\xi^-).
\]
This applies, in particular, to $G'=D^+_x$, $F'=d^+(x,\cdot,\cdot)$ as well as
to $G'=B^+_x$, $F'=b^+(x,\cdot,\cdot)$, yielding
\begin{align*}
&(\hat{L}_+G)(\eta^+,\eta^-)\\=&-\sum_{x\in\eta^+}\sum_{\substack{\xi^+\subset\eta^+\setminus
x\\ \xi^-\subset\eta^-}}G(\xi^+\cup x,\xi^-) \bigl(\KK
^{-1}d^+(x,\cdot\cup\xi^+,\cdot\cup\xi^-)\bigr)((\eta^+\setminus
x)\setminus\xi^+, \eta^-\setminus\xi^-)\\&+\int_{\X
}dx\,\sum_{\substack{\xi^+\subset\eta^+\\
\xi^-\subset\eta^-}}G(\xi^+\cup x,\xi^-) \bigl(\KK
^{-1}b^+(x,\cdot\cup\xi^+,\cdot\cup\xi^-)\bigr)(\eta^+\setminus\xi^+,
\eta^-\setminus\xi^-).
\end{align*}
The required expression for $\hat{L}_+$ then follows by interchanging the two 
sums appearing in the first summand and using \eqref{Ddef}, \eqref{Bdef}. 
Similar arguments applied to $L_{-}$ complete the proof.
\end{proof}

As we have mentioned in Subsection \ref{Subsection22}, $\hat{L}^*$ is 
defined on any $\mathcal{B}(\Ga_0^2)$-measurable function $k$ with respect to 
which the following equality holds
\[
\int_{\Ga_0^2} d\la^2\,\hat{L} G \, k =\int_{\Ga_0^2} d\la^2\, G \, \hat{L}^*k
\]
for all $G\in\Bbs$. In the next subsection we will give a meaning to 
$\hat{L}^*$ as an operator defined on a proper space of functions on 
$\Ga_0^2$. Before that, we derive an explicit expression for $\hat{L}^*k$,
$k\in\Bbs$.

\begin{proposition}\label{prop5}
Assume that for all $\La\in\B _c(\X )$ and all $n,m\in\N_0$,
\begin{multline*}
A^+_{\La,m,n}:=\int_{\Ga^{(n,m)}_\La}d\la^2(\eta^+,\eta^-)\sum_{\substack{\xi^+\subset\eta^+\\\xi^-\subset\eta^-}}\Biggl(\,\sum_{x\in\xi^+}
\bigl| D^+(x,\xi^+\setminus x,\xi^-,\eta^+\setminus\xi^+,\eta^-\setminus\xi^-)\bigr|\\+\int_\La dx
\bigl| B^+(x,\xi^+,\xi^-,\eta^+\setminus\xi^+,\eta^-\setminus\xi^-)\bigr|\Biggr)<\infty
\end{multline*}
and
\begin{multline*}
A^-_{\La,m,n}:=\int_{\Ga^{(n,m)}_\La}d\la^2(\eta^+,\eta^-)\sum_{\substack{\xi^+\subset\eta^+\\\xi^-\subset\eta^-}}\Biggl(\,\sum_{y\in\xi^-}
\bigl| D^-(y,\xi^+,\xi^-\setminus y,\eta^+\setminus\xi^+,\eta^-\setminus\xi^-)\bigr|\\+\int_\La dy
\bigl| B^-(y,\xi^+,\xi^-,\eta^+\setminus\xi^+,\eta^-\setminus\xi^-)\bigr|\Biggr)<\infty,
\end{multline*}
where 
$\Ga_\La^{(n,m)}:=\left(\Ga_\La^{(n)}\times\Ga_\La^{(m)}\right)\cap\Ga_0^2$. 
Then, for each $k\in\Bbs$,
\begin{align}\label{hatLexprast}
&(\hat{L}^*k)(\eta^+,\eta^-)\\=&-\sum_{x\in\eta^+}\int_{\Ga_0^2}d\la^2(\xi^+,
\xi^-)k(\eta^+\cup\xi^+,\eta^-\cup\xi^-)D^+\bigl(x,\eta^+\setminus x,\eta^-,\xi^+,
\xi^-\bigr)\nonumber\\
&+\sum_{x\in\eta^+}\int_{\Ga_0^2}d\la^2(\xi^+,
\xi^-)k((\eta^+\setminus x)\cup\xi^+,\eta^-\cup\xi^-)B^+\bigl(x,\eta^+\setminus x,\eta^-,\xi^+,
\xi^-\bigr)\nonumber\\
&-\sum_{y\in\eta^-}\int_{\Ga_0^2}d\la^2(\xi^+,
\xi^-)k(\eta^+\cup\xi^+,\eta^-\cup\xi^-)D^-\bigl(y,\eta^+,\eta^-\setminus y,\xi^+,
\xi^-\bigr)\nonumber\\
&+\sum_{y\in\eta^-}\int_{\Ga_0^2}d\la^2(\xi^+,
\xi^-)k(\eta^+\cup\xi^+,(\eta^-\setminus y)\cup\xi^-)B^-\bigl(y,\eta^+,\eta^-\setminus y,\xi^+,
\xi^-\bigr),\nonumber
\end{align}
for $\lambda^2$-a.a.~$(\eta^+,\eta^-)\in\Ga_0^2$.
\end{proposition}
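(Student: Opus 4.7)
The plan is to start from the defining duality $\int_{\Ga_0^2} d\la^2\,(\hat L G)\,k = \int_{\Ga_0^2} d\la^2\,G\,(\hat L^*k)$ for $G\in\Bbs$, insert the explicit formula for $\hat LG$ from Proposition \ref{Prop1}, and then massage each of the four summands so that $G(\xi^+,\xi^-)$ (or $G(\xi^+\cup x,\xi^-)$, etc.) appears as a factor against $d\la^2(\eta^+,\eta^-)$ with everything else playing the role of $(\hat L^*k)(\eta^+,\eta^-)$. Since $k\in\Bbs$ is supported in some $\bigsqcup_{n\leq N^+,m\leq N^-}\Ga_{\La}^{(n,m)}$ with $\La\in\B_c(\X)$, the hypotheses $A^\pm_{\La,m,n}<\infty$ provide exactly the absolute integrability needed to legitimize each interchange of summation and integration via Fubini's theorem.

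The main technical step is to reverse the role of $\xi$ and $\eta\setminus\xi$, which is done via Lemma \ref{Lmm2}. Taking, for example, the first summand of $\hat L G$ and pairing with $k(\eta^+,\eta^-)$, one applies \eqref{MI} with
\[
H(\eta^+,\eta^-,\xi^+,\xi^-):=G(\xi^+,\xi^-)k(\eta^+,\eta^-)\sum_{x\in\xi^+}D^+\bigl(x,\xi^+\setminus x,\xi^-,\eta^+\setminus\xi^+,\eta^-\setminus\xi^-\bigr)
\]
to rewrite the integral as a double integral in which $\xi^\pm$ is independent and $\eta^\pm\setminus\xi^\pm$ is replaced by $\eta^\pm$, while $k$ becomes $k(\eta^+\cup\xi^+,\eta^-\cup\xi^-)$. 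Relabeling $\xi\leftrightarrow\eta$ then yields the first summand of \eqref{hatLexprast}. The same procedure applied to the third summand produces the third summand of \eqref{hatLexprast}.

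For the two birth terms, which carry an extra $\int_{\X}dx$ acting on $G(\xi^+\cup x,\xi^-)$ or $G(\xi^+,\xi^-\cup y)$, after the same application of Lemma \ref{Lmm2} I will use the standard Lebesgue--Poisson identity
\[
\int_{\Ga_0}d\la(\xi)\int_{\X}dx\,F(\xi\cup x,x)=\int_{\Ga_0}d\la(\xi)\sum_{x\in\xi}F(\xi,x),
\]
(in its two-component form, applied to the $\pm$ coordinate as appropriate) to turn the $dx$ integral into a sum over $x\in\xi^+$ (respectively $y\in\xi^-$); the shift by $x$ then appears inside $k$ as $(\xi^+\setminus x)\cup\eta^+$ after the relabeling $\xi\leftrightarrow\eta$, which is precisely the structure of the second (respectively fourth) summand of \eqref{hatLexprast}. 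Collecting the four contributions and reading off the factor dual to $G$ gives the announced formula for $\hat L^*k$.

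The one point requiring care, and which the hypotheses $A^\pm_{\La,m,n}<\infty$ are designed to cover, is the justification of the two-fold interchange: first swapping the order of $\int d\la^2(\eta)$ with the finite sums $\sum_{\xi\subset\eta}$ and $\sum_{x\in\xi^+}$ (trivial pointwise, but we need joint integrability for the subsequent step), and then, after Lemma \ref{Lmm2} has produced a double integral, interchanging $\int d\la^2(\xi)$ with $\int d\la^2(\eta)$ so as to factor $G(\xi^+,\xi^-)$ out. Once absolute integrability is secured on $\mathrm{supp}\,G\subset\bigsqcup\Ga^{(n,m)}_\La$ by the finiteness of $A^\pm_{\La,m,n}$, both interchanges follow from Fubini, and the identification of $\hat L^*k$ holds in the $\la^2$-a.e.\ sense demanded by the statement.
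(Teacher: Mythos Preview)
Your proposal is correct and follows essentially the same route as the paper: both bound the integrand using the support of $G,k\in\Bbs$ against the finite quantities $A^\pm_{\La,m,n}$ to secure absolute integrability, then apply Lemma~\ref{Lmm2} to separate the $\xi$- and $\eta$-integrations, and finally handle the birth terms by one more pass through the Lebesgue--Poisson identity (which the paper phrases as ``a second application of Lemma~\ref{Lmm2}'', your $\int d\la(\xi)\int_\X dx\,F(\xi\cup x,x)=\int d\la(\xi)\sum_{x\in\xi}F(\xi,x)$ being exactly the $|\xi|=1$ case of that lemma). The relabeling $\xi\leftrightarrow\eta$ you describe is the same bookkeeping the paper performs implicitly.
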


\begin{proof}
By the definition of the space $\Bbs$, given $G,k\in\Bbs$ there are 
$\La\in\B_c(\X )$, $N\in\N$, $C>0$ such that
\[
|G|,|k|\leq C\1_{\left(\bigsqcup_{n=0}^{N}\Gamma_{\Lambda}^{(n)}\times\bigsqcup_{n=0}^{N}\Gamma_{\Lambda}^{(n)}\right)\cap\Ga_0^2}, 
\]
where $\1_\cdot$ denotes the indicator function of a set. Therefore,
\begin{align*}
&\int_{\Ga_0^2}d\la^2(\eta^+,\eta^-)\sum_{\substack{\xi^+\subset\eta^+\\
\xi^-\subset\eta^-}}\Biggl(\,\bigl|G(\xi^+,\xi^-)\bigr|\sum_{x\in\xi^+}
\bigl|D^+\bigl(x,\xi^+\setminus x,\xi^-,\eta^+\setminus\xi^+,
\eta^-\setminus\xi^-\bigr)\bigr|\nonumber\\&\qquad\qquad+\int_{\X
}dx\,\bigl|G(\xi^+\cup
x,\xi^-)\bigr|\bigl|B^+\bigl(x,\xi^+,\xi^-,\eta^+\setminus\xi^+,
\eta^-\setminus\xi^-\bigr)\bigr|\Biggr)
\bigl|k(\eta^+,\eta^-)|\\\leq&\,C^2\sum_{m,n=0}^NA^+_{\La,m,n}<\infty.
\end{align*}
This shows that the product $(\hat{L}_+G)k$ is integrable over $\Gamma_0^2$ 
with respect to the measure $\lambda^2$. Moreover, using the expression for 
$\hat{L}_+ G$ (derive in Proposition \ref{Prop1} and its proof) and Lemma 
\ref{Lmm2} we obtain
\begin{align*}
&\int_{\Ga_0^2} d\la^2(\eta^+,\eta^-)(\hat{L}_+ G) (\eta^+,\eta^-)\,
k(\eta^+,\eta^-) \\=& -\int_{\Ga_0^2}
d\la^2(\eta^+,\eta^-)\int_{\Ga_0^2}d\la^2(\xi^+,\xi^-)
k(\eta^+\cup\xi^+,\eta^-\cup\xi^-)\\&\qquad\qquad\times G(\xi^+,\xi^-)\sum_{x\in\xi^+}D^+\bigl(x,\xi^+\setminus
x,\xi^-,\eta^+, \eta^-\bigr)\nonumber\\
&+\int_{\Ga_0^2}
d\la^2(\eta^+,\eta^-)\int_{\Ga_0^2}d\la^2(\xi^+,\xi^-)
k(\eta^+\cup\xi^+,\eta^-\cup\xi^-)\\&\qquad\qquad\times\int_{\X
}dx\,G(\xi^+\cup x,\xi^-)B^+\bigl(x,\xi^+,\xi^-,\eta^+,
\eta^-\bigr),
\end{align*}
where a second application of Lemma \ref{Lmm2} to the latter summand leads to 
the expression for $\hat{L}_+^*$. Similar considerations yield an
expression for $\hat{L}_-^*$.
\end{proof}

\subsection{Definition of operators}\label{Subsection31}

For each $C>0$, let us consider the Banach space
\begin{equation}
\L _C:=L^1(\Ga_0^2,\la_C^2)\label{Bielefeld}
\end{equation}
with the usual norm 
\[
\|G\|_{\L_C}:=\int_{\Ga_0^2}d\la^2(\eta^+,\eta^-)\,|G(\eta^+,\eta^-)|\,C^{|\eta^+|+|\eta^-|}.
\] 
Assume that there is a function $N:\Ga_0^2\rightarrow\R$ such that 
\begin{equation}\label{taquase}
\int_{\Ga_\La^{(n,m)}}d\lambda^2(\eta^+,\eta^-)\,N(\eta^+,\eta^-)<\infty\quad \mathrm{for\,\,all}\,\,n,m\in\N
\mathrm{\,\,and\,\,all\,\,} \Lambda \in \B _c(\X )
\end{equation}
and, for $\la^2$-a.a.~$(\eta^+,\eta^-)\in\Ga_0^2$,
\begin{align}
&\sum_{x\in\eta^+}\Bigl\|D^+\bigl(x,\eta^+\setminus
x,\eta^-,\cdot, \cdot\bigr)\Bigr\|_{\L
_C}+\frac{1}{C}\sum_{x\in\eta^+}\Bigl\|B^+\bigl(x,\eta^+\setminus
x,\eta^-,\cdot, \cdot\bigr)\Bigr\|_{\L
_C}\nonumber\\&+\sum_{y\in\eta^-}\Bigl\|D^-\bigl(y,\eta^+,\eta^-\setminus
y,\cdot, \cdot\bigr)\Bigr\|_{\L
_C}+\frac{1}{C}\sum_{y\in\eta^-}\Bigl\|B^-\bigl(y,\eta^+,\eta^-\setminus
y,\cdot, \cdot\bigr)\Bigr\|_{\L _C}\nonumber\\ \leq&\,N(\eta^+,\eta^-)<\infty.\label{bdd}
\end{align}
This allows to define the set
\[
\D :=\D_{N,C}:=\bigl\{ G\in\L _C\bigm| NG\in\L _C\bigr\}.
\]
It is clear that $\Bbs\subset\D $, which implies that also $\D$ is dense in 
$\L _C$.

\begin{proposition}
Assume that integrability conditions \eqref{taquase}, \eqref{bdd} hold. 
Then, equality \eqref{Lhatexpr} provides a densely defined linear operator 
$\hat{L}$ in $\L _C$ with domain $\D$. In particular, for any $G\in\D$, the 
right-hand side of \eqref{Lhatexpr} is $\la^2$-a.e.~well-defined on 
$\Gamma_0^2$.
\end{proposition}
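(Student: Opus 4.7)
The plan is to reduce the statement to an $\L_C$-norm estimate. By Tonelli's theorem, if I can show that the $\L_C$-norm of the absolute value of each of the four summands in \eqref{Lhatexpr} is finite, then two things follow simultaneously: the sums and the integrals over $\X$ defining those summands converge absolutely $\la^2$-a.e., so the right-hand side of \eqref{Lhatexpr} is well-defined $\la^2$-a.e., and $\hat{L}G\in\L_C$. Linearity of $G\mapsto\hat{L}G$ is immediate from the formula. Density of $\D$ in $\L_C$ follows from the already-noted inclusion $\Bbs\subset\D$, combined with the standard density of $\Bbs$ in $L^1(\Ga_0^2,\la_C^2)$.

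The main computational step is a single application of Lemma \ref{Lmm2} per summand, taken with absolute values throughout. For the first (death) term I would choose
\[
H(\eta^+,\eta^-,\xi^+,\xi^-):=C^{|\eta^+|+|\eta^-|}|G(\xi^+,\xi^-)|\sum_{x\in\xi^+}\bigl|D^+(x,\xi^+\setminus x,\xi^-,\eta^+\setminus\xi^+,\eta^-\setminus\xi^-)\bigr|.
\]
Applying Lemma \ref{Lmm2} and then relabelling $\eta^\pm\setminus\xi^\pm$ again as $\eta^\pm$ so as to factor out the inner $C^{|\eta^+|+|\eta^-|}$ integration, the resulting expression becomes
\[
\int_{\Ga_0^2}d\la^2(\xi^+,\xi^-)\,C^{|\xi^+|+|\xi^-|}|G(\xi^+,\xi^-)|\sum_{x\in\xi^+}\bigl\|D^+(x,\xi^+\setminus x,\xi^-,\cdot,\cdot)\bigr\|_{\L_C},
\]
which is exactly one of the four quantities controlled by the hypothesis \eqref{bdd}. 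The $-$-death term is dealt with symmetrically.

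The birth terms require one extra ingredient. After applying Lemma \ref{Lmm2} to the $+$-birth summand what remains is $\int d\la^2(\xi^+,\xi^-)\int_\X dx\,C^{|\xi^+|+|\xi^-|}|G(\xi^+\cup x,\xi^-)|\,|B^+(x,\xi^+,\xi^-,\cdot,\cdot)|$ (paired with an inner $\L_C$-norm in the $\eta$-variables). Here I would invoke the standard one-component identity $\int d\la(\xi)\int_\X dx\,F(x,\xi)=\int d\la(\tilde\xi)\sum_{x\in\tilde\xi}F(x,\tilde\xi\setminus x)$ to absorb the extra point into $\xi^+$; this introduces the weight $C^{|\tilde\xi^+|-1}$, which is precisely the source of the $1/C$ prefactor accompanying the $B$-norms in \eqref{bdd}. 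Combining the four contributions yields
\[
\|\hat{L}G\|_{\L_C}\leq\int_{\Ga_0^2}d\la^2(\xi^+,\xi^-)\,C^{|\xi^+|+|\xi^-|}|G(\xi^+,\xi^-)|N(\xi^+,\xi^-)=\|NG\|_{\L_C}<\infty,
\]
the last inequality holding by the very definition of $\D$.

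The main obstacle is bookkeeping rather than analysis: after each application of Lemma \ref{Lmm2} and the birth change of variable, one must carefully track which arguments of $B^\pm, D^\pm$ are shifted by $x$ and whether $\xi^+$ or $\xi^+\setminus x$ appears, so that the four quantities produced align exactly with the four norms on the left-hand side of \eqref{bdd}. Because all manipulations are performed on non-negative integrands, Tonelli's theorem justifies every interchange of sums and integrals for free, and no analytic estimate beyond \eqref{bdd} is required.
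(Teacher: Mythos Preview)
Your proposal is correct and follows essentially the same route as the paper: apply Lemma~\ref{Lmm2} termwise with absolute values to bound $\|\hat{L}G\|_{\L_C}$ by $\|NG\|_{\L_C}$. The paper's proof is terser---it writes only the intermediate expression after Lemma~\ref{Lmm2} and then asserts the final bound---whereas you explicitly spell out the absorption of the birth point via $\int d\la(\xi)\int_\X dx\,F(x,\xi)=\int d\la(\tilde\xi)\sum_{x\in\tilde\xi}F(x,\tilde\xi\setminus x)$ and the resulting $1/C$ factor, which the paper leaves implicit; otherwise the arguments coincide.
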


\begin{proof}
Given a $G\in\D$, an application of Lemma \ref{Lmm2} to the expression 
corresponding to $\hat{L}_+$ (derived in Proposition \ref{Prop1} and its 
proof) yields
\begin{align*}
&\bigl\|\hat{L}_+G\bigr\|_{\L _C}\\\leq&\int_{\Ga_0^2}
d\la^2(\eta^+,\eta^-)\,C^{|\eta^+|+|\eta^-|}\int_{\Ga_0^2}d\la^2(\xi^+,\xi^-)\,C^{|\xi^+|+|\xi^-|}
\bigl|G(\xi^+,\xi^-)\bigr|\\&\qquad\qquad\times\sum_{x\in\xi^+}\bigl|D^+\bigl(x,\xi^+\setminus
x,\xi^-,\eta^+, \eta^-\bigr)\bigr|\\&+\,\int_{\Ga_0^2}
d\la^2(\eta^+,\eta^-)\,C^{|\eta^+|+|\eta^-|}\int_{\Ga_0^2}d\la^2(\xi^+,\xi^-)\,C^{|\xi^+|+|\xi^-|}\\&\qquad\qquad\times
\int_{\X }dx\,\bigl|G(\xi^+\cup
x,\xi^-)\bigr|\bigl|B^+\bigl(x,\xi^+,\xi^-,\eta^+,
\eta^-\bigr)\bigr|,
\end{align*}
and a similar estimate holds for $\|\hat{L}_-G\|_{\L _C}$. As a result,
\[
\bigl\|\hat{L}G\bigr\|_{\L _C}\leq\bigl\|NG\bigr\|_{\L _C}<\infty.\qedhere
\]
\end{proof}

Let us consider the dual space $( \L _{C})^{\prime }$, which can be realized by 
the Banach space 
\begin{equation*}
{\K }_{C}:=\left\{ k:\Ga^2_{0}\rightarrow {\R}\,\Bigm| k\cdot
C^{-|\cdot^+ |-|\cdot^-|}\in L^{\infty }(\Ga^2_{0},\la^2 )\right\}
\end{equation*} 
with the norm
\[
\Vert k\Vert _{{\K }_{C}}:=\Vert C^{-|\cdot^+ |-|\cdot^-|}k\Vert
_{L^{\infty }(\Ga^2_{0},\la^2 )}.
\]
The duality between the Banach spaces $\L_{C}$ and ${\K }_{C}$  is given by
\eqref{duality} with $\left\vert\left\langle \!\left\langle G,k\right\rangle \!\right\rangle
\right\vert \leq \Vert G\Vert _{\L _C}\cdot \Vert k\Vert _{{ \K
}_{C}}$. We observe that if $k\in\K _{C}$, then
\begin{equation}\label{normineq}
|k(\eta^+,\eta^-)|\leq \Vert k\Vert _{\K
_{C}}\,C^{|\eta^+|+|\eta^-|}
\end{equation}
for $\la^2$-a.a.~$(\eta^+,\eta^-) \in \Ga^2_{0}$.

\begin{proposition}\label{prop21}
Assume that integrability conditions \eqref{taquase}, \eqref{bdd} hold. In 
addition, assume that there are constants $A>0$, $M\in\N$, $\nu\geq1$ such that
\begin{equation}\label{notbig}
N(\eta^+,\eta^-)\leq A\bigl(1+|\eta^+|+|\eta^-|\bigr)^M\nu^{|\eta^+|+|\eta^-|}.
\end{equation}
Then, equality \eqref{hatLexprast} provides a linear operator $\hat{L}^*$ in 
$\K _C$ with domain $\K _{\alpha C}$, $\alpha\in\bigl(0,\frac{1}{\nu}\bigr)$. 
In particular, given a 
$k\in\K _{\alpha C}$ for some $\alpha\in\bigl(0,\frac{1}{\nu}\bigr)$, the 
right-hand side of \eqref{hatLexprast} is $\la^2$-a.e.~well-defined on 
$\Gamma_0^2$.
\end{proposition}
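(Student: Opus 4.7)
The plan is to bound $(\hat{L}^*k)(\eta^+,\eta^-)$ pointwise by a constant multiple of $\|k\|_{\K_{\alpha C}}\,C^{|\eta^+|+|\eta^-|}$, which simultaneously shows that the right-hand side of \eqref{hatLexprast} converges absolutely $\lambda^2$-a.e.\ and that $\hat{L}^*k \in \K_C$. The strategy is to treat each of the four summands in \eqref{hatLexprast} separately, using \eqref{normineq} (applied to the space $\K_{\alpha C}$) to dispose of $k$, recognizing the resulting integrals as $\L_{\alpha C}$-norms of $D^\pm$ and $B^\pm$, majorizing these by $\L_C$-norms, and then applying the structural assumptions \eqref{bdd} and \eqref{notbig}.

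For the first term, I would use
$|k(\eta^+\cup\xi^+,\eta^-\cup\xi^-)|\le \|k\|_{\K_{\alpha C}}(\alpha C)^{|\eta^+|+|\eta^-|+|\xi^+|+|\xi^-|}$
to bound it by
$\|k\|_{\K_{\alpha C}}(\alpha C)^{|\eta^+|+|\eta^-|}\sum_{x\in\eta^+}\|D^+(x,\eta^+\setminus x,\eta^-,\cdot,\cdot)\|_{\L_{\alpha C}}$.
Since $\alpha\nu<1$ and $\nu\ge 1$ force $\alpha C\le C$, the $\L_{\alpha C}$-norm is dominated by the $\L_C$-norm, and \eqref{bdd} gives $\sum_{x\in\eta^+}\|D^+(\ldots)\|_{\L_C}\le N(\eta^+,\eta^-)$. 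The third (${D^-}$) summand is handled identically. For the second summand, the argument runs in the same way except that $k$ is evaluated at $(\eta^+\setminus x)\cup\xi^+$, producing one fewer factor of $\alpha C$; the relevant part of \eqref{bdd} is $\tfrac{1}{C}\sum_{x\in\eta^+}\|B^+(\ldots)\|_{\L_C}\le N(\eta^+,\eta^-)$, which exactly compensates the missing $C$ and leaves an overall factor $\alpha^{-1}$. The fourth summand is analogous.

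Combining the four estimates yields
\[
|(\hat{L}^*k)(\eta^+,\eta^-)|\le 2(1+\alpha^{-1})\,\|k\|_{\K_{\alpha C}}\,N(\eta^+,\eta^-)\,(\alpha C)^{|\eta^+|+|\eta^-|},
\]
and inserting \eqref{notbig} gives the bound
\[
|(\hat{L}^*k)(\eta^+,\eta^-)|\le 2A(1+\alpha^{-1})\|k\|_{\K_{\alpha C}}\bigl(1+|\eta^+|+|\eta^-|\bigr)^M(\alpha\nu)^{|\eta^+|+|\eta^-|}C^{|\eta^+|+|\eta^-|}.
\]
The only non-trivial point is the last step: since $\alpha\nu<1$, the map $(n,m)\mapsto (1+n+m)^M(\alpha\nu)^{n+m}$ is bounded on $\N_0^2$, so the pointwise bound is indeed of the form (const)$\cdot\|k\|_{\K_{\alpha C}}C^{|\eta^+|+|\eta^-|}$, which is precisely the $\K_C$-condition. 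This also shows that each of the four integrals in \eqref{hatLexprast} converges absolutely for $\lambda^2$-a.a.~$(\eta^+,\eta^-)$. The main thing to notice (rather than a genuine obstacle) is that the polynomial factor coming from $N$ must be absorbed by the geometric decay $(\alpha\nu)^{|\eta^+|+|\eta^-|}$, which is precisely what the strict inequality $\alpha<1/\nu$ provides.
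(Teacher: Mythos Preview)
Your proposal is correct and follows essentially the same route as the paper: bound $|k|$ via \eqref{normineq} in $\K_{\alpha C}$, recognize the $(\xi^+,\xi^-)$-integrals as $\L_{\alpha C}$-norms of $D^\pm,B^\pm$, dominate these by the $\L_C$-norms (using $\alpha<1$), invoke \eqref{bdd}, and finally absorb the polynomial factor from \eqref{notbig} into the geometric decay $(\alpha\nu)^{|\eta^+|+|\eta^-|}$. The only cosmetic differences are that the paper obtains the slightly sharper constant $\alpha^{-1}$ (by bounding the $D^\pm$ coefficients $1$ by $\alpha^{-1}$ and summing all four pieces of \eqref{bdd} at once) instead of your $2(1+\alpha^{-1})$, and that it records an explicit bound $(1+t)^M a^t\le a^{-1}\bigl(\tfrac{M}{-e\ln a}\bigr)^M$ where you simply note boundedness.
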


\begin{proof}
For some $\alpha\in\bigl(0,\frac{1}{\nu}\bigr)$, let $k\in\K _{\alpha C}$. 
Then, using the expression corresponding to $\hat{L}_+^*$, defined in 
Proposition \ref{prop5} and its proof, for 
$\la^2$-a.a.~$(\eta^+,\eta^-)\in\Ga_0^2$ we obtain
\begin{align*}
&C^{-|\eta^+|-|\eta^-|}\bigl|(\hat{L}_+^*k)(\eta^+,\eta^-)\bigr|\\\leq&\,\|k\|_{\K
_{\alpha C}}\alpha
^{|\eta^+|+|\eta^-|}\sum_{x\in\eta^+}\int_{\Ga_0^2}d\la^2(\xi^+,
\xi^-)(\alpha
C)^{|\xi^+|+|\xi^-|}\\&\qquad\qquad\times\bigl|D^+\bigl(x,\eta^+\setminus
x,\eta^-,\xi^+,
\xi^-\bigr)\bigr|\\
&+\|k\|_{\K _{\alpha C}}(\alpha C)^{-1}\alpha
^{|\eta^+|+|\eta^-|}\sum_{x\in\eta^+}\int_{\Ga_0^2}d\la^2(\xi^+,
\xi^-)(\alpha
C)^{|\xi^+|+|\xi^-|}\\&\qquad\qquad\times\bigl|B^+\bigl(x,\eta^+\setminus
x,\eta^-,\xi^+, \xi^-\bigr)\bigr|,
\end{align*}
where we have used inequality \eqref{normineq}. A similar estimate holds for 
$C^{-|\eta^+|-|\eta^-|}\cdot$ $\bigl|(\hat{L}_-^*k)(\eta^+,\eta^-)\bigr|$. Both 
estimates combined with \eqref{notbig} lead to
\begin{align*}
C^{-|\eta^+|-|\eta^-|}\bigl|(\hat{L}^*k)(\eta^+,\eta^-)\bigr|
&\leq\frac{\|k\|_{\K _{\alpha C}}}{\alpha} \alpha
^{|\eta^+|+|\eta^-|}N(\eta^+,\eta^-)\\&\leq\frac{A\|k\|_{\K _{\alpha
C}}}{\alpha} (\alpha
\nu)^{|\eta^+|+|\eta^-|}\bigl(1+|\eta^+|+|\eta^-|\bigr)^M.
\end{align*}
Since $\alpha <1$, and thus $\alpha \nu <1$, an application of inequality
\begin{equation*}
(1+t)^{b}a^{t}\leq
\frac{1}{a}\left( \frac{b}{-e\ln a}\right) ^{b},\qquad b\geq 1,~a\in
\left( 0,1\right) ,~t\geq 0,
\end{equation*}
yields
\[
\bigl\|\hat{L}^*k\bigr\|_{\K _C}\leq\frac{A\|k\|_{\K _{\alpha
C}}}{\alpha}\frac{1}{\alpha\nu}\Bigl(\frac{M}{-e\ln(\alpha
\nu)}\Bigr)^M<\infty,
\]
completing the proof.
\end{proof}

\begin{remark}
Since the space $\L _C$ is not reflexive, a priori we cannot expect that the 
domain of $\hat{L}^*$ is dense in $\K _C$.
\end{remark}

\section{Conservative dynamics}\label{Section4}

In contrast to the birth-and-death dynamics, in the following dynamics
there is conservation on the total number of particles involved.

\subsection{Hopping particles: hierarchical equations}\label{subsection41}

Dynamically, in a hopping particle system, at each random moment of time 
particles randomly hop from one site to another according to a rate depending 
on the configuration of the whole system at that time. Since the particles are 
of two types, two situations may occur. The $\pm$ particles located in 
$\ga ^\pm$ hop over $\ga ^\pm$, or hop to sites in $\ga ^\mp$, thus changing 
its mark. In terms of generators these two different behaviors are informally 
described by
\begin{align*}
&(L_1F)(\ga^+,\ga^-)\\
:=&\sum_{x\in\ga ^+}\int_{\X }dx'\, c^+_1(x,x',\ga ^+\setminus x,\ga ^-)
\left(F(\ga ^+\setminus x\cup x',\ga ^-)-F(\ga ^+,\ga ^-)\right)\nonumber\\
&+ \sum_{y\in\ga ^-}\int_{\X }dy'\, c^-_1(y,y',\ga ^+,\ga ^-\setminus y)
\left(F(\ga ^+,\ga ^-\setminus y\cup y')-F(\ga ^+,\ga
^-)\right)\nonumber
\end{align*}
and
\begin{align}
&\left( L_2F\right) (\ga^+,\ga^-)\label{Y2}\\
:=&\sum_{x\in \ga ^+}\int_{\X }dy\,c^+_2\left(x, y, \ga ^+\setminus x,\ga
^-\right)
\left(F\left(\ga ^+\setminus x,\ga ^-\cup y\right)-F\left(\ga ^+,\ga ^-\right) \right)\nonumber\\
&+\sum_{y\in \ga ^{-}}\int_{\X }dx\,c^-_2\left(x,y,\ga ^+,\ga
^-\setminus y\right) \left(F\left(\ga ^+\cup x,\ga ^-\setminus y\right)
-F\left( \ga  ^{+},\ga  ^{-}\right) \right),\nonumber
\end{align}
respectively. Here the coefficient $c^+_1(x,x',\ga ^+,\ga ^-)\geq 0$
indicates the rate at which a + particle located at a position $x$
in a configuration $\ga ^+$ hops to a free site $x'$ keeping its mark, and 
$c^+_2(x,y,\ga ^+,\ga ^-)\geq 0$ indicates the rate at
which, given a configuration $(\ga ^+,\ga ^-)$, a + particle located
at a site $x\in\ga ^+$ hops to a free site $y$ and changes its mark to
$-$. A similar interpretation holds for the rates $c_i^-\geq 0$, $i=1,2$.

In what follows we assume that $c_i^\pm$, $i=1,2$, are measurable
functions such that, for a.a.~$x,y$, $c_i^\pm(x,y,\cdot,\cdot)$ are 
$\mathcal{B}(\Ga^2_0)$-measurable functions and, for 
$(\eta^+,\eta^-)\in\Ga_0^2$, $c_i^\pm(\cdot,\cdot,\eta^+,\eta^-)\in L^1_\mathrm{loc}(\X\times\X,dx\otimes dy)$.
Under these conditions, for each $F\in\mathcal{FP}(\Ga^2)=\KK(\Bbs)$, the 
expression for $L_iF$, $i=1,2$, is well-defined at least on $\Ga^2_0$, 
ensuring that for any $G\in \Bbs$
\begin{equation*}
\hat{L}_iG =\KK ^{-1}L_i\KK G
\end{equation*}
is well-defined on $\Ga^2_0$ (Remark \ref{remark}). Moreover, the above 
conditions allow to define the functions
\begin{equation*}
C_i^\pm(x,y,\xi^+,\xi^-,\eta^+,\eta^-):=\bigl(\KK
^{-1}c_i^\pm(x,y,\cdot\cup\xi^+,\cdot\cup\xi^-)\bigr)(\eta^+,\eta^-),\quad i=1,2,
\end{equation*}
for a.a.~$x,y\in\X $, $(\eta^+,\eta^-), (\xi^+,\xi^-)\in\Ga^2_0$ such that 
$\eta^\pm\cap\xi^\pm=\emptyset$. We set
\begin{equation*}
C^\pm_{i,x,y}(\eta^+,\eta^-):=C_i^\pm(x,y,\emptyset,\emptyset,\eta^+,\eta^-),
\quad i=1,2.
\end{equation*}

\begin{proposition}\label{Prop2} The action of $\hat{L}_i$, $i=1,2$, on 
functions $G\in\Bbs $ is given for any $(\eta^+,\eta^-)\in\Ga_0^2$ by
\begin{align}\label{L1+expr}
(\hat{L}_1 G)(\eta^+,\eta^-)=&\sum_{\substack{\xi^+\subset\eta^+ \\
\xi^-\subset\eta^-}}\sum_{x\in\xi^+}\int_{\X }dx'\,
\bigl(G(\xi^+\cup x'\setminus
x,\xi^-)-G(\xi^+,\xi^-)\bigr)\\&\qquad\qquad\times
C_1^+\bigl(x,x',\xi^+\setminus
x,\xi^-,\eta^+\setminus\xi^+,\eta^-\setminus
\xi^-\bigr)\nonumber\\&+\sum_{\substack{\xi^+\subset\eta^+ \\
\xi^-\subset\eta^-}}\sum_{y\in\xi^-}\int_{\X }dy'\,
\bigl(G(\xi^+,\xi^-\cup y'\setminus
y)-G(\xi^+,\xi^-)\bigr)\nonumber\\&\qquad\qquad\times
C_1^-\bigl(y,y',\xi^+,\xi^-\setminus
y,\eta^+\setminus\xi^+,\eta^-\setminus \xi^-\bigr),\nonumber
\end{align}
and
\begin{align}
(\hat{L}_2 G)(\eta^+,\eta^-)=&\sum_{\substack{\xi^+\subset\eta^+ \\
\xi^-\subset\eta^-}}\sum_{x\in\xi^+}\int_{\X
}dy\bigl(G(\xi^+\setminus x,\xi^-\cup
y)-G(\xi^+,\xi^-)\bigr)\label{L2+expr}\allowdisplaybreaks[0]\\&\qquad\qquad\times
C_2^+\bigl(x,y,\xi^+\setminus x,\xi^-,\eta^+\setminus
\xi^+,\eta^-\setminus
\xi^-\bigr)\nonumber\\&+\sum_{\substack{\xi^+\subset\eta^+ \\
\xi^-\subset\eta^-}}\sum_{y\in\xi^-}\int_{\X }dx\bigl(G(\xi^+\cup
x,\xi^-\setminus
y)-G(\xi^+,\xi^-)\bigr)\nonumber\\&\qquad\qquad\times
C_2^-\bigl(x,y,\xi^+,\xi^-\setminus y,\eta^+\setminus
\xi^+,\eta^-\setminus \xi^-\bigr).\nonumber
\end{align}
\end{proposition}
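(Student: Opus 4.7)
The plan is to mimic the proof of Proposition~\ref{Prop1}, with the bookkeeping adjusted to accommodate a hopping transition.

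Step 1: I compute the elementary $\KK$-transform differences. For $x\in\ga^+$ and $x'\in\X\setminus(\ga^+\cup\ga^-)$, splitting the outer sum in \eqref{M1} according to whether $x'\in\eta^+$, and separately according to whether $x\in\eta^+$, yields
\[
(\KK G)(\ga^+\setminus x\cup x',\ga^-)-(\KK G)(\ga^+,\ga^-)=\sum_{\substack{\eta^+\subset\ga^+\setminus x\\\eta^-\subset\ga^-}}\bigl(G(\eta^+\cup x',\eta^-)-G(\eta^+\cup x,\eta^-)\bigr).
\]
The corresponding identity for the $L_2^+$-transition reads
\[
(\KK G)(\ga^+\setminus x,\ga^-\cup y)-(\KK G)(\ga^+,\ga^-)=\sum_{\substack{\eta^+\subset\ga^+\setminus x\\\eta^-\subset\ga^-}}\bigl(G(\eta^+,\eta^-\cup y)-G(\eta^+\cup x,\eta^-)\bigr),
\]
and the $-$-parts are entirely symmetric in the roles of $\ga^+$ and $\ga^-$.

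Step 2: Each of $L_1^+F$ and $L_2^+F$ has the structural form $\sum_{x\in\ga^+}h(x,\ga^+\setminus x,\ga^-)$, so rule \eqref{rulesum} reduces $\KK^{-1}L_i^+\KK G$ to $\sum_{x\in\eta^+}(\KK^{-1}h(x,\cdot,\cdot))(\eta^+\setminus x,\eta^-)$; the $-$-case is analogous. Inside $h$, the rate $c_i^\pm(x,\cdot,\cdot,\cdot)$ is the $\KK$-transform of $C_{i,x,\cdot}^\pm$, and each difference from Step~1 is itself a $\KK$-transform of a shifted $G$. Pulling the integral over $x'$ (respectively $y$) past $\KK^{-1}$ by Fubini, which is licit by the $L^1_{\mathrm{loc}}$-hypothesis on $c_i^\pm$, and then using the algebraic identity \eqref{combin}, I rewrite $\KK^{-1}h(x,\cdot,\cdot)$ as an integral of a difference of two $\Star$-convolutions of $C_{i,x,\cdot}^\pm$ with the appropriate shifts of $G$.

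Step 3: I expand each $\Star$-convolution via \eqref{dd} and collapse the inner $\zeta$-sum by means of the identity $\bigl(\KK^{-1}F'(\cdot\cup\xi^+,\cdot\cup\xi^-)\bigr)(\sigma^+,\sigma^-)=\bigl(\KK G'(\sigma^+\cup\cdot,\sigma^-\cup\cdot)\bigr)(\xi^+,\xi^-)$ applied with $G'=C_{i,x,x'}^\pm$ and $F'=c_i^\pm(x,x',\cdot,\cdot)$; by definition of $C_i^\pm$ this reconstructs $C_i^\pm$ at the correct arguments, exactly as in the proof of Proposition~\ref{Prop1}. A final bookkeeping reindexing $\tilde\xi^+:=\xi^+\cup x$ merges the outer $\sum_{x\in\eta^+}\sum_{\xi^+\subset\eta^+\setminus x}$ into $\sum_{\xi^+\subset\eta^+}\sum_{x\in\xi^+}$ and yields the target forms \eqref{L1+expr} and \eqref{L2+expr}. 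I do not foresee any genuinely hard step; the only mildly delicate point, compared to Proposition~\ref{Prop1}, is keeping track of the two distinct spatial variables ($x,x'$ or $x,y$) that each hopping transition now carries, rather than the single variable of the birth-and-death case.
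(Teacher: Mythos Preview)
Your proposal is correct and follows essentially the same route as the paper: compute the $\KK$-transform differences for the hopping transitions, apply \eqref{rulesum} and \eqref{combin} to obtain the $\Star$-convolution form, and then unfold via \eqref{dd} together with the identity $\bigl(\KK^{-1}F'(\cdot\cup\xi^+,\cdot\cup\xi^-)\bigr)(\sigma^+,\sigma^-)=\bigl(\KK G'(\sigma^+\cup\cdot,\sigma^-\cup\cdot)\bigr)(\xi^+,\xi^-)$ exactly as in Proposition~\ref{Prop1}. The only cosmetic difference is that the paper records the Step~1 identities as a difference of two $\KK$-transforms of shifted $G$'s rather than as a single sum of differences, which is trivially equivalent to your formulation.
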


\begin{proof} We begin by observing that, similarly to the proof of 
Proposition \ref{Prop1}, the integrability property of $c_i^\pm$, $i=1,2$, on 
$\X$ is sufficient to ensure that, for any $G\in \Bbs$, all integrals 
appearing in \eqref{L1+expr}, \eqref{L2+expr} are finite. 

Since each $L_i$, $i=1,2$, is of the form $L_i= L_i^++L_i^-$, with $L_i^+$ 
concerning the $+$-system and $L_i^-$ the $-$-system, the proof reduces to 
prove the statement for each summand $L_i^+$, $L_i^-$, $i=1,2$. We will do it 
for $L_i^+$, $i=1,2$, being the proof for $L_i^-$, $i=1,2$, similar. For this 
purpose, first we observe that from definition \eqref{M1} of the 
$\KK$-transform, for any $(\ga ^+,\ga ^-)\in\Ga_0^2$ one has
\begin{align*}
(\KK G)(\ga ^+\setminus x\cup x',\ga ^-)&-(\KK G)(\ga ^+,\ga
^-)\\=&\,\bigl(\KK G(\cdot\cup x',\cdot)\bigr)(\gamma^+\setminus x,\gamma^-)-\bigl(\KK G(\cdot\cup x,\cdot)\bigr)(\gamma^+\setminus x,\gamma^-),\\(\KK G)(\ga ^+\setminus x,\ga ^-\cup y)&-(\KK G)(\ga
^+,\ga ^-)\\=&\,\bigl(\KK G(\cdot,\cdot\cup y)\bigr)(\gamma^+\setminus x,\gamma^-)-\bigl(\KK G(\cdot\cup x,\cdot)\bigr)(\gamma^+\setminus x,\gamma^-).
\end{align*}
This leads to
\begin{align*}
(\hat{L}_1^+G)(\eta^+,\eta^-)=&\sum_{x\in\eta^+}\int_{\X
}dx'\,\left(C^+_{1,x,x'}\Star \left(G(\cdot\cup
x',\cdot)-G(\cdot\cup x,\cdot)\right)\right)
(\eta^+\!\setminus\!x,\eta^-),\\
(\hat{L}_2^+G)(\eta^+,\eta^-)=&\sum_{x\in\eta^+}\int_{\X
}dy\,\left(C^+_{2,x,y}\Star \left(G(\cdot,\cdot\cup y)-G(\cdot\cup
x,\cdot)\right)\right) (\eta^+\!\setminus\!x,\eta^-),
\end{align*}
where we have used equality \eqref{rulesum}. Similar arguments used to prove 
Proposition \ref{Prop1} complete the proof for $L_i^+$, $i=1,2$.
\end{proof}

Concerning $\hat{L}_i^*$, $i=1,2$, one has the following explicit expressions.

\begin{proposition} Assume that for all $\La\in\B _c(\X )$ and all 
$n,m\in\N_0$,\begin{multline}
C_{1,\La,m,n}:=\int_{\Ga^{(n,m)}_\La}d\la^2(\eta^+,\eta^-)\int_\La
dx'\\ \times
\sum_{\substack{\xi^+\subset\eta^+\\\xi^-\subset\eta^-}}
\Biggl(\sum_{x\in\xi^+}\bigl|C_1^+\bigl(x,x',\xi^+\setminus x,\xi^-,
\eta^+\setminus\xi^+,\eta^-\setminus \xi^-\bigr)\bigr|\\
+ \sum_{y\in\xi^-}\bigl|C_1^-\bigl(y,x',\xi^+,\xi^-\setminus
y,\eta^+\setminus\xi^+, \eta^-\setminus
\xi^-\bigr)\bigr|\Biggr)<\infty\label{cond222}
\end{multline}
and
\begin{multline}
C_{2,\La,m,n}:=\int_{\Ga^{(n,m)}_\La}d\la^2(\eta^+,\eta^-)\int_\La
dx'\\ \times
\sum_{\substack{\xi^+\subset\eta^+\\\xi^-\subset\eta^-}}
\Biggl(\sum_{x\in\xi^+}\bigl|C_2^+\bigl(x,x',\xi^+\setminus x,\xi^-,
\eta^+\setminus\xi^+,\eta^-\setminus \xi^-\bigr)\bigr|\\
+ \sum_{y\in\xi^-}\bigl|C_2^-\bigl(x',y,\xi^+,\xi^-\setminus
y,\eta^+\setminus\xi^+, \eta^-\setminus
\xi^-\bigr)\bigr|\Biggr)<\infty,\label{cond2}
\end{multline}
where, as before, 
$\Ga_\La^{(n,m)}=\bigl(\Ga_\La^{(n)}\times\Ga_\La^{(m)}\bigr)\cap\Ga_0^2$. Then, 
for each $k\in\Bbs$,
\begin{align}\label{L1ast}
&(\hat{L}_1^*k)(\eta^+,\eta^-)\\=&
\nonumber\sum_{x\in\eta^+}\int_{\Ga_0^2} d\la^2(\xi^+,\xi^-)
\int_{\X }dx'\, k(\xi^+\cup\eta^+\cup x'\setminus
x,\xi^-\cup\eta^-)\\&\qquad\qquad\times
C_1^+\bigl(x',x,\eta^+\setminus
x,\eta^-,\xi^+,\xi^-\bigr)\nonumber\\&-
\sum_{x\in\eta^+}\int_{\Ga_0^2} d\la^2(\xi^+,\xi^-) \,
k(\xi^+\cup\eta^+,\xi^-\cup\eta^-)\nonumber\\&\qquad\qquad\times
\int_{\X }dx'C_1^+\bigl(x,x',\eta^+\setminus
x,\eta^-,\xi^+,\xi^-\bigr)\nonumber\\&+
\sum_{y\in\eta^-}\int_{\Ga_0^2} d\la^2(\xi^+,\xi^-) \int_{\X }dy'\,
k(\xi^+\cup\eta^+,\xi^-\cup\eta^-\cup y'\setminus
y)\nonumber\\&\qquad\qquad\times
C_1^-\bigl(y',y,\eta^+,\eta^-\setminus
y,\xi^+,\xi^-\bigr)\nonumber\\&-  \sum_{y\in\eta^-}\int_{\Ga_0^2}
d\la^2(\xi^+,\xi^-)
 k(\xi^+\cup\eta^+,\xi^-\cup\eta^-)\nonumber\\&\qquad\qquad\times \int_{\X }dy'\,C_1^-\bigl(y,y',\eta^+,\eta^-\setminus y,\xi^+,\xi^-\bigr),\nonumber
\end{align}
and
\begin{align}
&(\hat{L}_2^*k)(\eta^+,\eta^-)\label{L2ast}\\=&
\sum_{y\in\eta^-}\int_{\Ga_0^2} d\la^2(\xi^+,\xi^-) \int_{\X }dx\,
k(\xi^+\cup\eta^+\cup x,\xi^-\cup\eta^-\setminus
y)\nonumber\\&\qquad\qquad\times
C_2^+\bigl(x,y,\eta^+,\eta^-\setminus
y,\xi^+,\xi^-\bigr)\nonumber\\&-  \sum_{x\in\eta^+}\int_{\Ga_0^2}
d\la^2(\xi^+,\xi^-)
k(\xi^+\cup\eta^+,\xi^-\cup\eta^-)\nonumber\\&\qquad\qquad\times
\int_{\X }dy\, C_2^+\bigl(x,y,\eta^+\setminus
x,\eta^-,\xi^+,\xi^-\bigr)\nonumber\\&+
\sum_{x\in\eta^+}\int_{\Ga_0^2} d\la^2(\xi^+,\xi^-) \int_{\X
}dy\,k(\xi^+\cup\eta^+\setminus x,\xi^-\cup\eta^-\cup
y)\nonumber\\&\qquad\qquad\times C_2^-\bigl(x,y,\eta^+\setminus
x,\eta^-,\xi^+,\xi^-\bigr)\nonumber\\&-
\sum_{y\in\eta^-}\int_{\Ga_0^2} d\la^2(\xi^+,\xi^-) \int_{\X }dx\,
k(\xi^+\cup\eta^+,\xi^-\cup\eta^-)\nonumber\\&\qquad\qquad\times
C_2^-\bigl(x,y,\eta^+,\eta^-\setminus y,\xi^+,\xi^-\bigr),\nonumber
\end{align}
for $\lambda^2$-a.a.~$(\eta^+,\eta^-)\in\Ga_0^2$.
\end{proposition}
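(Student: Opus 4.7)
My plan is to mirror, with appropriate adjustments for the conservative dynamics, the strategy used in the proof of Proposition \ref{prop5}. Fixing $G,k\in\Bbs$ supported in $\bigsqcup_{n=0}^{N}\Ga_\La^{(n)}\times\bigsqcup_{n=0}^{N}\Ga_\La^{(n)}$ with $|G|,|k|\le C$, I would first use the explicit formulas for $\hat{L}_i G$ from Proposition \ref{Prop2} to estimate $|(\hat{L}_i G)(\eta^+,\eta^-)|\,|k(\eta^+,\eta^-)|$ termwise. The resulting four integrals (over subsets $\xi^\pm\subset\eta^\pm$, sums over $x\in\xi^+$ or $y\in\xi^-$, and integrations $\int_\La dx'$) are dominated by $C^2\sum_{m,n=0}^{N}C_{i,\La,m,n}$, so conditions \eqref{cond222}, \eqref{cond2} guarantee $(\hat{L}_i G)k\in L^1(\Ga_0^2,\la^2)$ and justify all Fubini-type manipulations below.

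Next, writing $L_i=L_i^++L_i^-$ and working with $\hat{L}_i^\pm$ separately, I apply Lemma \ref{Lmm2} to $\int_{\Ga_0^2} d\la^2\,(\hat{L}_i^\pm G)\,k$ to convert the internal sum $\sum_{\xi^\pm\subset\eta^\pm}$ into an independent integral over $\Ga_0^2$: the variable $\eta^\pm$ is relabeled as $\eta^\pm\cup\xi^\pm$ with $\eta^\pm,\xi^\pm$ disjoint, so $k(\eta^+,\eta^-)$ becomes $k(\eta^+\cup\xi^+,\eta^-\cup\xi^-)$ and the $\eta^\pm\setminus\xi^\pm$-slots of the $C_i^\pm$-coefficients turn into $\eta^\pm$. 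The two ``loss'' terms (the $-G(\xi^+,\xi^-)$ contributions) then require no further work: after this first reduction, Fubini and a final renaming $(\eta^\pm\leftrightarrow\xi^\pm)$ directly yield the second and fourth summands in \eqref{L1ast} and the second and fourth summands in \eqref{L2ast}.

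The heart of the argument is the treatment of the ``gain'' terms, i.e., those involving $G(\xi^+\cup x'\setminus x,\xi^-)$ in $\hat{L}_1$ and $G(\xi^+\setminus x,\xi^-\cup y)$ in $\hat{L}_2$. For these, I apply a Minlos-type change of variable on the internal $\la^2$-integral that absorbs the modified site into the configuration: for $\hat{L}_1^+$ the substitution $\tilde\xi^+:=(\xi^+\setminus x)\cup x'$ exchanges $\sum_{x\in\xi^+}\int dx'$ with $\sum_{x'\in\tilde\xi^+}\int dx$, turning $G(\xi^+\cup x'\setminus x,\xi^-)$ into $G(\tilde\xi^+,\xi^-)$ and sending the argument $\xi^+\setminus x$ of $C_1^+$ to $\tilde\xi^+\setminus x'$; correspondingly $k(\eta^+\cup\xi^+,\eta^-\cup\xi^-)$ becomes $k(\eta^+\cup\tilde\xi^+\cup x\setminus x',\eta^-\cup\xi^-)$, and a posterior swap $x\leftrightarrow x'$ of names produces precisely the first summand of \eqref{L1ast} with the observed swap of the first two arguments of $C_1^+$. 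For $\hat{L}_2^+$ the analogous substitution is $(\tilde\xi^+,\tilde\xi^-):=(\xi^+\setminus x,\xi^-\cup y)$, which moves a $+$-particle to the $-$-configuration and converts $\sum_{x\in\xi^+}\int dy$ into $\int dx\sum_{y\in\tilde\xi^-}$, thereby producing the first summand of \eqref{L2ast}; the $\hat{L}_1^-$ and $\hat{L}_2^-$ gain terms are handled symmetrically. All set-theoretic cancellations of the form $(\xi^+\setminus x\cup x')\cup x'\setminus x=\xi^+$ etc.~are valid $\la^2$-a.e., since the integrations over $x,x',y\in\X$ against Lebesgue measure ensure a.s.~disjointness from the finite configurations involved.

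The main technical obstacle is bookkeeping: verifying that after the Minlos-type substitution the arguments of $C_i^\pm$ land in exactly the positions prescribed by \eqref{L1ast}, \eqref{L2ast} (in particular accounting for the transposition of the first two entries of $C_1^\pm$ and $C_2^\pm$ in the gain terms), and that the $x$-piece correctly appears inside the argument of $k$ with the right sign ($\cup x'\setminus x$ or $\cup x\setminus y$). Apart from this combinatorial care, the structure of the computation is identical to the birth-and-death case treated in Proposition \ref{prop5}, with Lemma \ref{Lmm2} and its Minlos-type corollary providing all the tools needed.
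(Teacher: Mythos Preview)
Your proposal is correct and follows essentially the same route as the paper's proof: the paper likewise invokes the integrability estimate \`a la Proposition~\ref{prop5} to ensure $(\hat L_i^\pm G)k\in L^1(\Ga_0^2,\la^2)$, applies Lemma~\ref{Lmm2} to turn the subset sums into free $\la^2$-integrations, and then, for the gain terms, performs the same Minlos-type exchange (e.g.\ $\sum_{x\in\xi^+}\int dx'\leftrightarrow\sum_{x'\in\xi^+}\int dx$ for $\hat L_1^+$) together with the final relabeling $(\eta^\pm\leftrightarrow\xi^\pm)$. Your bookkeeping of the argument swaps in $C_i^\pm$ and of the $\cup\,x'\setminus x$ shifts in $k$ is accurate; the paper simply records the outcome of these manipulations for $\hat L_1^+$ and $\hat L_2^+$ and declares the rest ``straightforward''.
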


\begin{proof}
Similarly to the proof of Proposition \ref{prop5}, conditions \eqref{cond222},
\eqref{cond2} ensure that for any $G,k\in\Bbs$, one has 
$(\hat{L}_i^\pm G)k\in L^1(\Ga_0^2,\lambda^2)$, $i=1,2$. Moreover, for 
$\hat{L}_1^+$, the use of its expression, derived in Proposition \ref{Prop2} 
and its proof, leads through an application of Lemma \ref{Lmm2} to
\begin{align*}
&\int_{\Ga_0^2} d\la^2(\eta^+,\eta^-)\,(\hat{L}_1^+ G)
(\eta^+,\eta^-)\, k(\eta^+,\eta^-) \\=& \int_{\Ga_0^2}
d\la^2(\eta^+,\eta^-)\int_{\Ga_0^2}d\la^2(\xi^+,\xi^-)\,
k(\eta^+\cup\xi^+,\eta^-\cup\xi^-)\sum_{x\in\xi^+}\int_{\X
}dx'\\&\qquad\qquad\times \bigl(G(\xi^+\cup x'\setminus
x,\xi^-)-G(\xi^+,\xi^-)\bigr) C_1^+\bigl(x,x',\xi^+\setminus
x,\xi^-,\eta^+,\eta^-\bigr)\\=&\int_{\Ga_0^2}d\la^2(\xi^+,\xi^-)\,G(\xi^+,\xi^-)
\int_{\Ga_0^2} d\la^2(\eta^+,\eta^-) \\&\qquad\qquad\times
 \sum_{x'\in\xi^+}\int_{\X }dx\,k(\eta^+\cup\xi^+\cup x\setminus x',\eta^-\cup\xi^-)C_1^+\bigl(x,x',\xi^+\setminus x',\xi^-,\eta^+,\eta^-\bigr)\\& -\int_{\Ga_0^2}d\la^2(\xi^+,\xi^-)\,
G(\xi^+,\xi^-)\sum_{x\in\xi^+}\int_{\Ga_0^2} d\la^2(\eta^+,\eta^-)\,
k(\eta^+\cup\xi^+,\eta^-\cup\xi^-)\\&\qquad\qquad\times \int_{\X
}dx'\,C_1^+\bigl(x,x',\xi^+\setminus
x,\xi^-,\eta^+,\eta^-\bigr).
\end{align*}
Similarly, for $\hat{L}_2^+$, we obtain
\begin{align*}
&\int_{\Ga_0^2} d\la^2(\eta^+,\eta^-)\,(\hat{L}_2^+ G)
(\eta^+,\eta^-)\, k(\eta^+,\eta^-) \\=&\int_{\Ga_0^2}
d\la^2(\eta^+,\eta^-)\int_{\Ga_0^2}d\la^2(\xi^+,\xi^-)\,
k(\eta^+\cup\xi^+,\eta^-\cup\xi^-)\\&\qquad\qquad\times\sum_{x\in\xi^+}\int_{\X
}dy\,\bigl(G(\xi^+\setminus x,\xi^-\cup
y)-G(\xi^+,\xi^-)\bigr)\\&\qquad\qquad\qquad\qquad\times
C_2^+\bigl(x,y,\xi^+\setminus x,\xi^-,\eta^+,\eta^-\bigr).
\end{align*}
The rest of the proof follows now straightforwardly.
\end{proof}

\subsection{Hopping particles: definition of operators}

Assume that for each $i=1,2$ there is a function $N_i:\Ga_0^2\rightarrow\R$ 
such that 
\begin{equation}\label{taquase2}
\int_{\Ga_\La^{(n,m)}}d\lambda^2(\eta^+,\eta^-)\,N_i(\eta^+,\eta^-)<\infty\quad \mathrm{for\,\,all}\,\,n,m\in\N
\mathrm{\,\,and\,\,all\,\,} \Lambda \in \B _c(\X )
\end{equation}
and, for $\la^2$-a.a.~$(\eta^+,\eta^-)\in\Ga_0^2$,
\begin{align}
&\sum_{x\in\eta^+}\biggl(\Bigl\|\int_{\X
}dy\,C_1^+\bigl(x,y,\eta^+\setminus x,\eta^-,\cdot,
\cdot\bigr)\Bigr\|_{\L _C}\nonumber\\&\qquad\qquad\qquad+\Bigl\|\int_{\X
}dy\,C_1^+\bigl(y,x,\eta^+\setminus x,\eta^-,\cdot,
\cdot\bigr)\Bigr\|_{\L
_C}\biggr)\nonumber\\&+\sum_{y\in\eta^-}\biggl(\Bigl\|\int_{\X
}dx\,C_1^-\bigl(x,y,\eta^+,\eta^-\setminus y,\cdot,
\cdot\bigr)\Bigr\|_{\L _C}\nonumber\\&\qquad\qquad\qquad+\Bigl\|\int_{\X
}dx\,C_1^-\bigl(y,x,\eta^+,\eta^-\setminus y,\cdot,
\cdot\bigr)\Bigr\|_{\L _C}\biggr)\nonumber\\\leq &\,N_{1}(\eta^+,\eta^-)<\,\infty,\label{bdd1}
\end{align}
and
\begin{align}
&\sum_{x\in\eta^+}\biggl(\Bigl\|\int_{\X
}dy\,C_2^+\bigl(x,y,\eta^+\setminus x,\eta^-,\cdot,
\cdot\bigr)\Bigr\|_{\L _C}\nonumber\\&\qquad\qquad\qquad+\Bigl\|\int_{\X
}dy\,C_2^-\bigl(x,y,\eta^+\setminus x,\eta^-,\cdot,
\cdot\bigr)\Bigr\|_{\L
_C}\biggr)\nonumber\\&+\sum_{y\in\eta^-}\biggl(\Bigl\|\int_{\X
}dx\,C_2^+\bigl(x,y,\eta^+,\eta^-\setminus y,\cdot,
\cdot\bigr)\Bigr\|_{\L _C}\nonumber\\&\qquad\qquad\qquad+\Bigl\|\int_{\X
}dx\,C_2^-\bigl(x,y,\eta^+,\eta^-\setminus y,\cdot,
\cdot\bigr)\Bigr\|_{\L _C}\biggr)\nonumber\\\leq &\,N_2(\eta^+,\eta^-)<\,\infty.\label{bdd2}
\end{align}

Under these conditions, let us consider the sets
\[
\D _i:=\D_i (N_i,C):=\bigl\{ G\in\L _C\bigm| N_iG\in\L _C\bigr\}, \quad i=1,2,
\]
where $\L _C$ is the Banach space defined in \eqref{Bielefeld}. Of course, 
$\Bbs\subset\D _1\cap\D _2$, which implies that both $\D _1$ and $\D _2$ are 
dense in $\L _C$.

\begin{proposition}
Assume that integrability conditions \eqref{taquase2}, \eqref{bdd1}, 
\eqref{bdd2} hold. Then, equality \eqref{L1+expr} (resp., \eqref{L2+expr}) 
provides a densely defined linear operator $\hat{L}_1$ (resp., $\hat{L}_2$) in 
$\L _C$ with domain $\D _1$ (resp., $\D _2$). In particular, for any 
$G\in\D _1$ (resp., $G\in\D _2$), the right-hand side of \eqref{L1+expr} 
(resp., \eqref{L2+expr}) is $\la^2$-a.e.~well-defined on $\Gamma_0^2$.
\end{proposition}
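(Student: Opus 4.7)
The plan is to adapt the argument used for birth-and-death: for $G\in\D_i$, $i=1,2$, establish the estimate $\|\hat{L}_iG\|_{\L_C}\le\|N_iG\|_{\L_C}$, from which both the $\la^2$-a.e.~well-definedness of the right-hand sides of \eqref{L1+expr}, \eqref{L2+expr} and the inclusion $\hat{L}_i(\D_i)\subset\L_C$ follow. Density of $\D_i$ in $\L_C$ is immediate from $\Bbs\subset\D_i$ together with the density of $\Bbs$.

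To carry this out for $\hat{L}_1$, I would take $G\in\D_1$, apply the triangle inequality to \eqref{L1+expr}, and split each summand over $\xi^\pm\subset\eta^\pm$ into a \emph{gain} piece (where $G$ is evaluated at $(\xi^+\cup x'\setminus x,\xi^-)$ or $(\xi^+,\xi^-\cup y'\setminus y)$) and a \emph{loss} piece (where $G$ is evaluated at $(\xi^+,\xi^-)$). Multiplying by $C^{|\eta^+|+|\eta^-|}$ and integrating, the central bookkeeping step is to apply Lemma \ref{Lmm2} to rewrite $\int d\la^2(\eta)\sum_{\xi\subset\eta}(\cdot)$ as a double integral $\int d\la^2(\xi)\int d\la^2(\eta)(\cdot)$, the $\eta$-variables now playing the role of the complements $\eta^\pm\setminus\xi^\pm$.

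For the loss terms $|G(\xi^+,\xi^-)|$ factors out and the combined integration in $\eta$ and in $x'$ (resp.\ $y'$) yields exactly $\|\int dx'\,C_1^+(x,x',\xi^+\setminus x,\xi^-,\cdot,\cdot)\|_{\L_C}$ and its $-$-counterpart, matching the first-ordering summands of \eqref{bdd1}. The gain terms require an additional step: in the $\xi^+$-integration apply the Mecke-type identity $\int d\la(\xi)\sum_{x\in\xi}F(x,\xi\setminus x)=\int dx\int d\la(\beta)F(x,\beta)$ with $\beta:=\xi^+\setminus x$, and then apply the same identity in reverse on the variable $x'$ to reconstitute a new target set $\tilde\xi^+:=\beta\cup x'$. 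Since $|\tilde\xi^+|=|\xi^+|$, the weight $C^{|\xi^+|}$ is preserved by this swap; $G$ is re-expressed at the new configuration $(\tilde\xi^+,\xi^-)$, while the residual rate term becomes $\|\int dx\,C_1^+(x,x',\tilde\xi^+\setminus x',\xi^-,\cdot,\cdot)\|_{\L_C}$, which is precisely the second-ordering summand of \eqref{bdd1}. The $-$-system is handled in the same way.

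Summing the four contributions (gain and loss on the $+$ and $-$ sides) yields $\|\hat{L}_1G\|_{\L_C}\le\|N_1G\|_{\L_C}<\infty$. For $\hat{L}_2$ the plan is formally identical, except that each gain term transfers a particle between the two species, so the Mecke-type substitution in a $+$-gain term reconstitutes the new point inside $\tilde\xi^-$ rather than $\tilde\xi^+$ (and conversely for $-$-gain), producing the cross-species norms $\|\int dx\,C_2^\mp(\cdots)\|_{\L_C}$ that appear in \eqref{bdd2}. The main obstacle—and the reason hypotheses \eqref{bdd1}, \eqref{bdd2} symmetrically include both orderings of the spatial arguments of $C_i^\pm$—is this Mecke-type swap of source and target labels forced by the shifted argument of $G$ in the gain terms; once it is carried out, the remainder is bookkeeping parallel to the birth-and-death case.
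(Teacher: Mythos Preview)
Your proposal is correct and follows the same route as the paper: apply Lemma~\ref{Lmm2} to decouple the $\xi$- and $\eta$-integrations, split into gain and loss contributions, and bound by $\|N_iG\|_{\L_C}$. The paper's proof is terser---it records the intermediate estimate after Lemma~\ref{Lmm2} and asserts the conclusion without spelling out the Mecke-type swap on the gain terms; your explicit explanation of why \eqref{bdd1} and \eqref{bdd2} must carry both orderings of the spatial arguments of $C_i^\pm$ fills in a step the paper leaves implicit.
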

\begin{proof}
We just estimate $\|\hat{L}_1^+G\|_{\L _C}$, being similar the 
estimate for $\hat{L}_1^-$. Given a $G\in\D _1$, an application 
of Lemma \ref{Lmm2} to the expression corresponding to $\hat{L}_1^+$ (derived 
in Proposition \ref{Prop2} and its proof) yields
\begin{align*}\|\hat{L}_1^+G\|_{\L _C}\leq&\int_{\Ga_0^2}d\la^2(\xi^+,\xi^-)
C^{|\xi^+|+|\xi^-|}\\&\qquad\qquad\times \sum_{x\in\xi^+}\int_{\X
}dx'\, \bigl(|G(\xi^+\cup x'\setminus
x,\xi^-)|+|G(\xi^+,\xi^-)|\bigr)\\&\qquad\qquad\qquad\times
\bigl\|C_1^+\bigl(x,x',\xi^+\setminus
x,\xi^-,\cdot,\cdot\bigr)\bigr\|_{\L _C}.
\end{align*}
This shows that $\|\hat{L}_1G\|_{\L _C}\leq\|\hat{L}_1^+G\|_{\L
_C}+\|\hat{L}_1^-G\|_{\L _C}\leq\|N_1G\|_{\L _C}<\infty$. The proof for 
$\hat{L}_2$ is analogous.
\end{proof}

Similar arguments used to prove Proposition \ref{prop21} lead to the next 
result.

\begin{proposition}
Assume that integrability conditions \eqref{taquase2}, \eqref{bdd1}, 
\eqref{bdd2} hold. In addition, assume that there are constants $A>0$, 
$M\in\N$, $\nu\geq1$ such that
\begin{equation*}
N_{i}(\eta^+,\eta^-)\leq A\bigl(1+|\eta^+|+|\eta^-|\bigr)^M\nu^{|\eta^+|+|\eta^-|},
\quad i=1,2.
\end{equation*}
Then, equality \eqref{L1ast} (resp., \eqref{L2ast}) provides a linear operator 
$\hat{L}_1^*$ (resp., $\hat{L}_2^*$) in $\K _C$ with domain $\K _{\alpha C}$, 
$\alpha\in\bigl(0,\frac{1}{\nu}\bigr)$. In particular, given a 
$k\in\K _{\alpha C}$ for some $\alpha\in\bigl(0,\frac{1}{\nu}\bigr)$, 
the right-hand side of \eqref{L1ast} (resp., \eqref{L2ast}) is 
$\la^2$-a.e.~well-defined on $\Gamma_0^2$.
\end{proposition}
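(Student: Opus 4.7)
The plan is to mimic the proof of Proposition \ref{prop21} essentially verbatim, with the explicit expressions \eqref{L1ast}, \eqref{L2ast} playing the role of \eqref{hatLexprast} and the estimates \eqref{bdd1}, \eqref{bdd2} playing the role of \eqref{bdd}. The scheme has three steps: (i) a pointwise bound on $|k(\cdots)|$ using the $\K_{\alpha C}$-norm of $k$; (ii) reduction of each $\xi$-integral to one of the $\L_C$-norms appearing inside \eqref{bdd1} or \eqref{bdd2}; (iii) a final application of the elementary inequality $(1+t)^b a^t\le\frac{1}{a}\bigl(\frac{b}{-e\ln a}\bigr)^b$ for $a\in(0,1)$, $b\ge 1$, $t\ge 0$.

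The key pointwise observation making step (i) clean is that in every summand of \eqref{L1ast} and \eqref{L2ast} the pair of configurations at which $k$ is evaluated has total cardinality $|\eta^+|+|\eta^-|+|\xi^+|+|\xi^-|$: the ``same-type hopping'' terms of \eqref{L1ast} simultaneously remove a point from $\eta^\pm$ and insert a free-site point into the same component, while the ``mark-flipping'' terms of \eqref{L2ast} remove from one component and insert into the other; in both cases the insertion lies outside the existing configuration for $\la^2$-a.a.~integration variables. Hence \eqref{normineq} applied to $k\in\K_{\alpha C}$ produces the clean factor $(\alpha C)^{|\eta^+|+|\eta^-|+|\xi^+|+|\xi^-|}$. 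Pulling the $\eta$-dependent part out of the $\xi$-integrations and using $\alpha^{|\xi^+|+|\xi^-|}\le 1$, each of the four $\xi$-integrals becomes one of the $\L_C$-norms that appears inside \eqref{bdd1}, respectively \eqref{bdd2}, yielding
\[
C^{-|\eta^+|-|\eta^-|}\bigl|(\hat{L}_i^*k)(\eta^+,\eta^-)\bigr|\le\frac{\|k\|_{\K_{\alpha C}}}{\alpha}\,\alpha^{|\eta^+|+|\eta^-|}N_i(\eta^+,\eta^-),\qquad i=1,2.
\]
Combined with the polynomial-exponential bound on $N_i$, step (iii) with $a=\alpha\nu<1$, $b=M$ yields
\[
\bigl\|\hat{L}_i^*k\bigr\|_{\K_C}\le\frac{A\,\|k\|_{\K_{\alpha C}}}{\alpha\cdot\alpha\nu}\Bigl(\frac{M}{-e\ln(\alpha\nu)}\Bigr)^M<\infty,
\]
which simultaneously establishes boundedness and the $\la^2$-a.e.~well-definedness of the right-hand sides of \eqref{L1ast}, \eqref{L2ast}.

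The only point requiring genuine attention, rather than a literal transcription of the birth-and-death argument, is matching the four summands of \eqref{L1ast} (and of \eqref{L2ast}) to the correct $\L_C$-norms inside \eqref{bdd1} (and \eqref{bdd2}): in several of those summands the variable integrated over $\X$ sits in the \emph{first} slot of $C_i^\pm$ rather than the second, which is precisely why \eqref{bdd1}--\eqref{bdd2} were written with both orderings $C_i^\pm(x,y,\cdots)$ and $C_i^\pm(y,x,\cdots)$. This bookkeeping is mechanical but is the one place where one has to verify the correspondence carefully. No new analytic ingredient beyond Lemma \ref{Lmm2} and the elementary inequality already used in Proposition \ref{prop21} is required.
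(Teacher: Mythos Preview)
Your proposal is correct and is exactly the approach the paper intends: the paper's own proof consists of the single sentence ``Similar arguments used to prove Proposition \ref{prop21} lead to the next result,'' and you have spelled out precisely those similar arguments, including the cardinality observation that keeps the exponent $|\eta^+|+|\eta^-|+|\xi^+|+|\xi^-|$ uniform across all summands and the bookkeeping that matches each term of \eqref{L1ast}, \eqref{L2ast} to the appropriate $\L_C$-norm in \eqref{bdd1}, \eqref{bdd2}.
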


\subsection{Flipping particles}

Dynamically, in a flipping particle system, at each random moment of
time particles randomly flip marks keeping their sites. In terms of generators 
this behavior is informally described by
\begin{align}\label{flip}
(L_0F)(\ga^+,\ga^-)=&\sum_{x\in\ga^+}a^+(x,\ga^+\setminus x,\ga^-)\bigl(F(\ga^+\setminus
x,\ga^-\cup x)-F(\ga^+,\ga^-)\bigr)\\&+\sum_{y\in\ga^-}a^-(x,\ga^+,\ga^-\setminus y)\bigl(F(\ga^+\cup
y,\ga^-\setminus y)-F(\ga^+,\ga^-)\bigr),\nonumber
\end{align}
where $a^+(x,\ga ^+,\ga ^-)\geq 0$ indicates the rate at which a
$+$-particle located at $x\in\ga ^+$ flips the mark to ``$-$''.
A similar interpretation holds for the rate $a^-\geq 0$ appearing in
\eqref{flip}. We observe that, formally, $L_0$ is a particular case of the 
mapping $L_2$ defined in \eqref{Y2} with
\[
c^\pm_2(x,y,\ga^+,\ga^-)=\delta(x-y)a^\pm(x,\ga^+,\ga^-).
\]
Therefore, the results obtained therein justify the results for $L_0$.
The proof of Proposition \ref{Prop11} below is then fully similar.

In what follows we assume that $a^\pm$ are measurable functions such that, for
a.a.~$x\in\X$, $a^\pm(x,\cdot,\cdot)$ are $\mathcal{B}(\Ga^2_0)$-measurable 
functions and, for $(\eta^+,\eta^-)\in\Ga^2_0$,
$a^\pm(\cdot,\eta^+,\eta^-)\in L^1_{\mathrm{loc}}(\R^d, dx)$. We set
\begin{equation*}
A^\pm(x,\xi^+,\xi^-,\eta^+,\eta^-):=\bigl(\KK
^{-1}a^\pm(x,\cdot\cup\xi^+,\cdot\cup\xi^-)\bigr)(\eta^+,\eta^-),
\end{equation*}
for a.a.~$x\in\R^d$ and $(\eta^+,\eta^-), (\xi^+,\xi^-)\in\Ga^2_0$ such that
$\eta^\pm\cap\xi^\pm=\emptyset$.

\begin{proposition}\label{Prop11} If $G\in
B_{\mathrm{bs}}(\Ga_0^2)$, then for any $(\eta^+,\eta^-)\in\Ga_0^2$
\begin{align*}
&(\hat{L}_0G)(\eta^+,\eta^-)\\=&\sum_{\substack{\xi^+\subset\eta^+ \\
\xi^-\subset\eta^-}}\sum_{x\in\xi^+}\bigl(G(\xi^+\setminus x,\xi^-\cup
x)-G(\xi^+,\xi^-)\bigr)\nonumber
A^+\bigl(x,\xi^+\setminus x,\xi^-,\eta^+\setminus
\xi^+,\eta^-\setminus
\xi^-\bigr)\nonumber\\&+\sum_{\substack{\xi^+\subset\eta^+ \\
\xi^-\subset\eta^-}}\sum_{y\in\xi^-}\bigl(G(\xi^+\cup
y,\xi^-\setminus
y)-G(\xi^+,\xi^-)\bigr)\nonumber
A^-\bigl(y,\xi^+,\xi^-\setminus y,\eta^+\setminus
\xi^+,\eta^-\setminus \xi^-\bigr).\nonumber
\end{align*}
If, in addition, there is a function $N_0:\Ga_0^2\rightarrow\R$ such that 
\[
\int_{\Ga_\La^{(n,m)}}d\lambda^2(\eta^+,\eta^-)\,N_0(\eta^+,\eta^-)<\infty\quad \mathrm{for\,\,all}\,\,n,m\in\N
\mathrm{\,\,and\,\,all\,\,} \Lambda \in \B _c(\X )
\]
and, for $\la^2$-a.a.~$(\eta^+,\eta^-)\in\Ga_0^2$,
\begin{align*}
&\sum_{x\in\eta^+}\Bigl(\bigl\|A^+\bigl(x,\eta^+\setminus x,\eta^-,\cdot,
\cdot\bigr)\bigr\|_{\L _C}+\bigl\|A^-\bigl(x,\eta^+\setminus x,\eta^-,\cdot,
\cdot\bigr)\bigr\|_{\L
_C}\Bigr)\nonumber\\&+\sum_{y\in\eta^-}\Bigl(\bigl\|A^+\bigl(y,\eta^+,\eta^-\setminus y,\cdot,
\cdot\bigr)\bigr\|_{\L _C}+\bigl\|A^-\bigl(y,\eta^+,\eta^-\setminus y,\cdot,
\cdot\bigr)\bigr\|_{\L _C}\biggr)\nonumber\\\leq &\, N_0(\eta^+,\eta^-)<\infty,
\end{align*}
then, for each $G\in\L_C$ such that $N_0G\in\L_C$, we have $\hat{L}_0G\in\L_C$.
Moreover, if there are $A>0$, $M\in\N$, $\nu\geq1$ such that
\begin{equation*}
N_{0}(\eta^+,\eta^-)\leq A\bigl(1+|\eta^+|+|\eta^-|\bigr)^M\nu^{|\eta^+|+|\eta^-|},
\end{equation*}
then, $\hat{L}_0^*k\in\K _C$ for any $k\in\K _{\alpha C}$, 
$\alpha\in\bigl(0,\frac{1}{\nu}\bigr)$, and
\begin{align*}
&(\hat{L}_0^*k)(\eta^+,\eta^-)\\=&  \sum_{y\in\eta^-}\int_{\Ga_0^2}
d\la^2(\xi^+,\xi^-) k(\xi^+\cup\eta^+\cup
y,\xi^-\cup\eta^-\setminus y)
A^+\bigl(y,\eta^+,\eta^-\setminus
y,\xi^+,\xi^-\bigr)\nonumber\\&-  \sum_{x\in\eta^+}\int_{\Ga_0^2}
d\la^2(\xi^+,\xi^-)
k(\xi^+\cup\eta^+,\xi^-\cup\eta^-) A^+\bigl(x,\eta^+\setminus
x,\eta^-,\xi^+,\xi^-\bigr)\nonumber\\&+
\sum_{x\in\eta^+}\int_{\Ga_0^2} d\la^2(\xi^+,\xi^-)\,k(\xi^+\cup\eta^+\setminus x,\xi^-\cup\eta^-\cup
x) A^-\bigl(x,\eta^+\setminus
x,\eta^-,\xi^+,\xi^-\bigr)\nonumber\\&-
\sum_{y\in\eta^-}\int_{\Ga_0^2} d\la^2(\xi^+,\xi^-)
k(\xi^+\cup\eta^+,\xi^-\cup\eta^-)
A^-\bigl(y,\eta^+,\eta^-\setminus y,\xi^+,\xi^-\bigr),\nonumber
\end{align*}
for $\lambda^2$-a.a.~$(\eta^+,\eta^-)\in\Ga_0^2$.
\end{proposition}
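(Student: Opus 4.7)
The plan is to follow the scheme already established for the hopping operator $L_2$ in Section \ref{subsection41}, exploiting the formal identification $c_2^\pm(x,y,\ga^+,\ga^-)=\delta(x-y)a^\pm(x,\ga^+,\ga^-)$ highlighted by the authors. There are three things to prove: the explicit formula for $\hat L_0 G$ on $\Bbs$, the $\L_C$-boundedness $\hat L_0 G\in\L_C$ whenever $N_0 G\in\L_C$, and the $\K_C$-boundedness of $\hat L_0^*$ on $\K_{\alpha C}$ together with its explicit form.

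First, I would derive the formula for $\hat L_0 G$ by repeating the proof of Proposition \ref{Prop2}. Using definition \eqref{M1} one computes, for $x\in\ga^+$,
\[
(\KK G)(\ga^+\setminus x,\ga^-\cup x)-(\KK G)(\ga^+,\ga^-)=\bigl(\KK G(\cdot,\cdot\cup x)\bigr)(\ga^+\setminus x,\ga^-)-\bigl(\KK G(\cdot\cup x,\cdot)\bigr)(\ga^+\setminus x,\ga^-),
\]
and symmetrically for $y\in\ga^-$. Applying $\KK^{-1}$, invoking the $\Star$-convolution identity \eqref{combin} together with the summation rule \eqref{rulesum}, and unfolding the $\Star$-product via \eqref{dd}, one obtains the claimed expression, now with the kernel $A^\pm$ in place of $C_2^\pm$ and no $\int_\X dy$-integral (it is collapsed by the formal $\delta(x-y)$). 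The whole derivation is parallel to the one of $\hat L_2^\pm$, only shorter, since the diagonal substitution removes one free variable.

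Second, for the $\L_C$-estimate I would write $\hat L_0 G$ in its $\Star$-convolution form and apply Lemma \ref{Lmm2}, pulling $|G(\xi^+,\xi^-)|$ outside. The inner integrals are then exactly $\|A^\pm(x,\eta^+\setminus x,\eta^-,\cdot,\cdot)\|_{\L_C}$ and $\|A^\pm(y,\eta^+,\eta^-\setminus y,\cdot,\cdot)\|_{\L_C}$; summing over $x\in\eta^+$ and $y\in\eta^-$ and using the hypothesis yields
\[
\|\hat L_0 G\|_{\L_C}\le \|N_0 G\|_{\L_C}<\infty,
\]
which is the desired bound.

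Third, for $\hat L_0^*$ I would derive the explicit formula by a double application of Lemma \ref{Lmm2} to $\int_{\Ga_0^2}d\la^2\,(\hat L_0 G)\,k$ with $G,k\in\Bbs$, in direct analogy with the proof for $\hat L_2^*$ (and with the $dy$-integral collapsed by $\delta$, producing the four summands shown). For the boundedness, given $k\in\K_{\alpha C}$ with $\alpha\in(0,1/\nu)$, I would use \eqref{normineq} to bound each $k$-factor by $\|k\|_{\K_{\alpha C}}(\alpha C)^{|\eta^+|+|\eta^-|+|\xi^+|+|\xi^-|}$, which transforms the integrals in $\xi^\pm$ into $\|A^\pm\|_{\L_{\alpha C}}\le\|A^\pm\|_{\L_C}$ (since $\alpha C\le C$), and yields
\[
C^{-|\eta^+|-|\eta^-|}\bigl|(\hat L_0^*k)(\eta^+,\eta^-)\bigr|\le \tfrac{1}{\alpha}\|k\|_{\K_{\alpha C}}\alpha^{|\eta^+|+|\eta^-|}N_0(\eta^+,\eta^-).
\]
Inserting the polynomial-exponential bound on $N_0$ and applying the auxiliary inequality $(1+t)^{b}a^{t}\le a^{-1}(b/(-e\ln a))^{b}$ with $a=\alpha\nu<1$, $b=M$, $t=|\eta^+|+|\eta^-|$ gives $\|\hat L_0^*k\|_{\K_C}<\infty$, completing the proof. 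The only step requiring any care is the bookkeeping of the $\delta$-substitution in step one: one must verify that replacing $\int_\X dy\,C_2^\pm(x,y,\cdot)$ by $A^\pm(x,\cdot)$ is consistent with \eqref{rulesum}, but since both reductions commute with $\KK^{-1}$, no spurious factor arises.
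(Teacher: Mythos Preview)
Your proposal is correct and follows precisely the approach the paper itself indicates: the paper does not give a detailed proof of Proposition~\ref{Prop11} but simply observes that $L_0$ is formally the $L_2$ operator with $c_2^\pm(x,y,\cdot,\cdot)=\delta(x-y)a^\pm(x,\cdot,\cdot)$ and declares the proof ``fully similar'' to that case. Your three-step sketch (formula for $\hat L_0 G$ via the $\KK$-transform identities and \eqref{rulesum}, the $\L_C$-bound via Lemma~\ref{Lmm2}, and the $\K_C$-bound for $\hat L_0^*$ via \eqref{normineq} and the elementary inequality $(1+t)^b a^t\le a^{-1}(b/(-e\ln a))^b$) matches exactly the proofs of Propositions~\ref{Prop2}, \ref{prop5}, and \ref{prop21} that the paper is invoking.
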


\section{Examples of rates}

For one-component systems there are many examples of
birth-and-death dynamics (e.g.~Glauber-type dynamics in mathematical
physics, Bolker--Dieckmann--Law--Pacala dynamics in mathematical
biology) as well as of hopping dynamics (e.g.~Kawasaki-type dynamics). 
These dynamics have been studied, in particular, in
\cite{FK2009,FKK09b,FKL06,KKL08, KoKtZh06,KLRII05,LK03}.

From the point of view of applications, multicomponent systems lead naturally 
to a richer situation due to many different possibilities for
concrete models and corresponding rates $b^\pm,d^\pm,c_i^\pm$, discussed in the 
previous sections. For instance, one may consider (birth-and-death) 
predator-prey models in which the death rate of preys (representing e.g.~the 
$+$-system) is higher due to the presence of a higher number of predators 
(representing the $-$-system) in a close neighborhood, while 
the birth rate of predators is higher if there is a higher number of preys 
nearby. For simplicity, assuming that there is no competition between
predators as well as between preys, typical rates are of the type
\begin{equation}\label{PP}
\begin{aligned}
d^+(x,\ga^+,\ga^-)&=m^++\sum_{y\in\ga^-}a_1(x-y), \\
d^-(y,\ga^+,\ga^-)&\equiv m^-,\\
b^+(x,\ga^+,\ga^-)&=\sum_{x'\in\ga^+}a_2(x-x'),\\
b^-(y,\ga^+,\ga^-)&=\sum_{y'\in\ga^-}a_3(y-y')\left(\kappa +
\sum_{x\in\ga^+}a_4(x-y')\right),
\end{aligned}
\end{equation}
for $m^\pm,\kappa>0$ and for even functions $0\leq a_i\in L^1(\X, dx)$, 
$i=1,2,3,4$. A similar situation occurs in other biological systems such as 
host-parasite or age-structured dynamics. On the other hand, on 
mathematical physics models, variants of the continuous Ising model 
\cite{GH96,GMRZ06,KZ07} (an analog of the Glauber dynamics) concern
birth and death rates of a different type. The simplest variant is
$d^\pm(x,\ga^+,\ga^-)\equiv m^\pm>0$ and
\begin{equation}\label{SI}
b^\pm(x,\ga^+,\ga^-)=b^\pm(x,\ga^\mp)=\exp\left(
-\sum_{y\in\ga^\mp}\phi(x-y) \right),
\end{equation}
with $\phi:\X\rightarrow\R\cup\{\infty\}$ being a pair-potential in $\X$.

These examples of rates are natural and quite general. Indeed, applications
deal with rates which are either ``linear'' functions
\[
\langle a_x,\ga^\pm\rangle:=\sum_{y\in\ga^\pm}a_x(y),
\]
with $a_x(y)=a(x-y)$ for some even function $a$, products of such linear 
functions on different variables $\ga^+,\ga^-$ (in particular, of polynomial 
type), or exponentials of these linear functions. For instance, in biological 
models concerning the so-called establishment and fecundity, rates are
naturally defined by products or superpositions of linear functions and
their exponentials (for the one-component case see \cite{FKK11b}). 

The results of the previous sections have shown that to derive explicit 
expressions for the mappings $\hat{L}$, $\hat{L}^*$ and to define sufficient
conditions allowing an extension of $\hat{L}$, $\hat{L}^*$ to linear operators 
one only has to study $A^\pm,B^\pm,C^\pm,D^\pm$. We explain now how to proceed 
for linear and exponential rates.

Let $b^\pm$, $d^\pm$ be defined as in \eqref{PP}. Then, for example for $d^+$,
\[
d^+(x,\eta^+\cup\ga^+,\eta^-\cup\ga^-)=m^++\sum_{y\in\eta^-}a_1(x-y)+\sum_{y\in\ga^-}a_1(x-y).
\]
By definitions \eqref{Ddef} of $D^+$ and \eqref{inverseK2} of $\KK^{-1}$, a 
simple calculation yields
\begin{align*}
D^+(x,\eta^+,\eta^-,\xi^+,\xi^-)=&\,\Bigl( m^++\sum_{y\in\eta^-}a_1(x-y)\Bigr)0^{|\xi^+|}0^{|\xi^-|}\\&+0^{|\xi^+|}\1_{\{\xi^-=\{y\}\}}a_1(x-y),
\end{align*}
being easy to show that for each $C>0$,
\[
\sum_{x\in\eta^+}\bigl\|D^+(x,\eta^+\setminus
x,\eta^-,\cdot,\cdot)\bigr\|_{\L_C}\leq
m|\eta^+|+\sum_{x\in\eta^+}\sum_{y\in\eta^-}a_1(x-y)+C|\eta^+|\int_\X dx\,a_1(x).
\]
Similar estimates naturally hold for $d^-$ and $b^\pm$. All together, 
these estimates yield an explicit form for the function $N$ introduced in 
\eqref{bdd}.

Let us now assume that $b^\pm$ are defined as in \eqref{SI} with $d^\pm$ being 
constants. Then,
\[
b^+(x,\eta^+\cup\ga^+,\eta^-\cup\ga^-)=\exp\left(
-\sum_{y\in\eta^-}\phi(x-y) \right)\exp\left(
-\sum_{y\in\ga^-}\phi(x-y) \right),
\]
and again the use of definitions \eqref{Ddef} and \eqref{inverseK2} leads to
\[
B^+(x,\eta^+,\eta^-,\xi^+,\xi^-)=0^{|\xi^+|}\exp\left(
-\sum_{y\in\eta^-}\phi(x-y) \right)\prod_{y\in\xi^-}\left(e^{-\phi(x-y)}-1\right).
\]
Assuming that $\phi(x)\geq- \upsilon $, $x\in\X$, for some $\upsilon \geq0$,
and $\beta:=\int_\X dx\,\bigl|e^{-\phi(x)}-1\bigr|<\infty$, we then obtain
\[
\sum_{x\in\eta^+}\bigl\|B^+(x,\eta^+\setminus x,\eta^-,\cdot,\cdot)\bigr\|_{\L_C}\leq|\eta^+|e^{\upsilon
|\eta^-|}e^{C\beta},
\]
where we have used the following equality which follows from definition 
\eqref{LPm} of the measure $\lambda$,
\[
\int_{\Ga_0}d\la(\xi^-)\prod_{y\in\xi^-}|f(y)|=\exp\left(\|f\|_{L^1(\X,dx)}\right),\quad f\in L^1(\X,dx).
\]
Similar estimates naturally hold for $b^-$, allowing at the end to derive an 
explicit form for the function $N$, introduced in \eqref{bdd}.

\subsection*{Acknowledgments}

Financial support of DFG through SFB 701 (Bielefeld University), 
German-Ukrainian Project 436 UKR 113/97 and FCT through
PTDC/MAT/100983/2008 and ISFL-1-209 are gratefully acknowledged.


\end{document}